\DeclareMathOperator{\dist}{dist}
\newcommand{\Pc}{\mathcal{P}}
\newcommand{\Sc}{\mathcal{S}}
\newcommand{\eps}{\varepsilon}
\newcommand{\Oish}{\widetilde{O}}
\DeclareMathOperator{\MST}{MST}
\begin{document}
%\title{Weighted Sparse and Lightweight Spanners with Local Additive Error}
\title{On additive spanners in weighted graphs with local error}
\author{Reyan~Ahmed\inst{1} \and
Greg~Bodwin\inst{2} \and 
%Faryad~Darabi~Sahneh\inst{1} \and 
Keaton~Hamm\inst{3} \and
Stephen~Kobourov\inst{1} \and 
Richard~Spence\inst{1}}
\institute{Department of Computer Science, University of Arizona \and
Department of Computer Science, University of Michigan \and
Department of  Mathematics, University of Texas at Arlington}
\date{}
\authorrunning{Ahmed et al.}
\maketitle %\todo{title suggestion: ``Sparse, lightweight additive spanners in weighted graphs''? -RCS}
%\vspace{-2em}
\begin{abstract}
    %An \emph{additive $+\beta$ spanner} of a graph $G$ is a subgraph which preserves shortest paths up to an additive $+\beta$ error.
    An \emph{additive $+\beta$ spanner} of a graph $G$ is a subgraph which preserves distances up to an additive $+\beta$ error. Additive spanners are well-studied in unweighted graphs but have only recently received attention in weighted graphs [Elkin et al.\ 2019 and 2020, Ahmed et al.\ 2020]. This paper makes two new contributions to the theory of weighted additive spanners.
    
    For weighted graphs, [Ahmed et al.\ 2020] provided constructions of sparse spanners with \emph{global} error $\beta = cW$, where $W$ is the maximum edge weight in $G$ and $c$ is constant.
    We improve these to \emph{local} error by giving spanners with additive error $+cW(s,t)$ for each vertex pair $(s,t)$, where $W(s, t)$ is the maximum edge weight along the shortest $s$--$t$ path in $G$. These include pairwise $+(2+\eps)W(\cdot,\cdot)$ and $+(6+\eps) W(\cdot, \cdot)$ spanners over vertex pairs $\Pc \subseteq V \times V$ on $O_{\eps}(n|\Pc|^{1/3})$ and $O_{\eps}(n|\Pc|^{1/4})$ edges for all $\eps > 0$, which extend previously known unweighted results up to $\eps$ dependence, as well as an all-pairs $+4W(\cdot,\cdot)$ spanner on $\Oish(n^{7/5})$ edges.
    
    Besides sparsity, another natural way to measure the quality of a spanner in weighted graphs is by its \emph{lightness}, defined as the total edge weight of the spanner divided by the weight of an MST of $G$.
    %While the above spanners do not provide any lightness guarantees, we can still construct lightweight spanners independent of $W$:
    We provide a $+\eps W(\cdot,\cdot)$ spanner with $O_{\eps}(n)$ lightness, and a $+(4+\eps) W(\cdot,\cdot)$ spanner with $O_{\eps}(n^{2/3})$ lightness. These are the first known additive spanners with nontrivial lightness guarantees. All of the above spanners can be constructed in polynomial time.
\end{abstract}

\section{Introduction}

Given an undirected graph $G(V,E)$, a \emph{spanner} is a subgraph $H$ which approximately preserves distances in $G$ up to some error.
Spanners are an important primitive in the literature on network design and shortest path algorithms, with applications in motion planning in robotics~\cite{MB13,DB14,DBLP:journals/ijcga/CaiK97,SSAH14}, asynchronous protocol design~\cite{PU89jacm}, approximate shortest path algorithms~\cite{dor2000all}, and much more; see survey \cite{ahmed2020graphElsevier}.
%The goal in research on spanners is generally to optimize the tradeoff between the size of the spanner (as measured by number of edges $|E(H)|$) and the amount that it distorts the distance metric defined by the input graph. 
One general goal in research on spanners is to minimize the size or sparsity of the spanner (as measured by the number of edges $|E(H)|$), given some error by which distances can be distorted.
For weighted graphs, another desirable goal is to minimize the \emph{lightness}, defined as the total weight of the spanner divided by the weight of a minimum spanning tree (MST) of $G$.

Spanners were introduced in the 1980s by Peleg and Sch\"{a}ffer~\cite{doi:10.1002/jgt.3190130114}, who first considered \emph{multiplicative} error. A subgraph $H$ is a (multiplicative) $k$-spanner of $G(V,E)$ if $d_H(s,t) \le k \cdot d_G(s,t)$ for all vertices $s,t \in V$, where $d_G(s,t)$ is the distance between $s$ and $t$ in $G$. Since $H$ is a subgraph, we also have $d_G(s, t) \le d_H(s, t)$ by definition, so the distances in $H$ approximate those of $G$ within a multiplicative factor of $k$, sometimes called the stretch factor. Alth{\" o}fer et al.~\cite{Alth90} showed that all $n$-vertex graphs have multiplicative $(2k-1)$-spanners on $O(n^{1+1/k})$ edges with $O(n/k)$ lightness, and this edge bound is the best possible assuming the girth conjecture by Erd\H{o}s~\cite{erdos1963extremal} from extremal combinatorics. Meanwhile, this initial lightness bound has been repeatedly improved in follow-up work, and the optimal bound still remains open \cite{CDNS92,ENS14,CW16,Filtser16,LS20}.
%{\color{red}Filtser et al.~\cite{Filtser16} later significantly improved the lightness result and showed that the lightness of this multiplicative spanner is roughly $O(n^{1/k})$.}
%This upper bound holds only for undirected graphs, but it allows the input graph to have any positive edge weights.

Multiplicative spanners are extremely well applied in computer science.
However, they are typically applied to very large graphs where it may be undesirable to take on errors that scale with the (possibly very large) distances in the input graph.
A more desirable error is \emph{additive} error which does not depend on the original graph distances at all:
\begin{definition}[Additive $+\beta$ spanner]
Given a graph $G(V,E)$ and $\beta \ge 0$, a subgraph $H$ is a $+\beta$ spanner of $G$ if \begin{equation} \label{eqn:spanner-ineq}
    d_G(s, t) \le d_H(s, t) \le d_G(s, t) + \beta
\end{equation}
for all vertices $s,t \in V$.
\end{definition}

A pairwise $+\beta$ spanner is a subgraph $H$ for which~\eqref{eqn:spanner-ineq} only needs to hold for specific vertex pairs $\Pc \subseteq V \times V$ given on input, and a subsetwise spanner is a pairwise spanner with $\Pc = \Sc \times \Sc$ for some $\Sc \subseteq V$ (if $\Pc = V \times V$, these are sometimes called all-pairs spanners, for clarity).
It is known that all unweighted graphs $G(V,E)$ with $|V|=n$ have (all-pairs) $+2$ spanners on $O(n^{3/2})$ edges \cite{Aingworth99fast,knudsen2014additive},
$+4$ spanners on $\Oish(n^{7/5})$ edges \cite{chechik2013new,bodwin2020note}, and $+6$ spanners on $O(n^{4/3})$ edges \cite{baswana2010additive,knudsen2014additive,woodruff2010additive}. On the negative side, there exist graphs which have no $+\beta$ spanner on $O(n^{4/3 - \eps})$ edges even for arbitrarily large constant $\beta$~\cite{abboud20174}. This presents a barrier to using additive spanners in applications where a very sparse subgraph, say on $O(n^{1.001})$ edges, is needed. However, the lower bound construction in~\cite{abboud20174} is rather pathological; tradeoffs do continue for certain natural classes of graphs with good girth or expansion properties~\cite{baswana2010additive}, and recent experimental work~\cite{ahmed2021multi} showed that tradeoffs seem to continue for graphs constructed from common random graph models.

A more serious barrier preventing the applicability of additive spanners is that classic constructions only apply to \emph{unweighted} graphs, while many naturally-occurring metrics are not expressible by a unit-weight graph.
To obtain additive spanners of weighted graphs $G=(V,E,w)$ where $w:E \to \mathbb{R}^+$, the error term $+\beta$ needs to scale somehow with the edge weights of the input graph.
Prior work \cite{ahmed2020weighted} has considered \emph{global} error of type $\beta = cW$, where $W = \max_{e \in E} w(e)$ is the maximum edge weight in the input graph and $c$ is a constant.
However, a more desirable paradigm studied by Elkin, Gitlitz, and Neiman~\cite{elkin2019almost,elkin2020improved} is to consider \emph{local} error in terms of the maximum edge weight along a shortest $s$--$t$ path:
\begin{definition} [Local $+cW(\cdot,\cdot)$ spanner]
Given a graph $G(V,E)$, subgraph $H$ is a (local) $+cW(\cdot,\cdot)$ spanner if $d_H(s, t) \le d_G(s, t) + cW(s, t)$ for all $s, t \in V$, where $W(s, t)$ is the maximum edge weight along a shortest path $\pi(s,t)$ in $G$.\footnote{If there are multiple shortest $s$--$t$ paths, then 
%a common scheme is to 
we break ties consistently so that subpaths of shortest paths are also shortest paths.}
\end{definition}
%\gbnote{Is this footnote needed?  Do we use tiebreaking somewhere?}
%\todo{I think this is necessary so that $W(u,v) \le W(s,t)$ whenever $u$-$v$ is a subpath of $\pi(s,t)$; to be on the safe side I think we should assume subpaths are also shortest paths and break ties accordingly.}

It is often the case that $W(s,t) \ll W$ for many vertex pairs $(s,t)$ in which a $+cW(\cdot,\cdot)$ spanner has much less additive error for such vertex pairs.
Additionally, a $+cW(\cdot,\cdot)$ spanner is also a multiplicative $(c+1)$-spanner, whereas a $+cW$ spanner can have unbounded multiplicative stretch.
This relationship between additive and multiplicative stretch is thematic in the area \cite{Elkin:2004:SCG:976327.984900,baswana2010additive}.

%First, on the practical side, if graphs have (say) a power-law distribution of edge weights, then most pairwise distances may be much shorter than the weight of the globally heaviest edge.
%It is not very meaningful to preserve these distances up to a term that scales with the globally heaviest edge weight, but it is meaningful to preserve distances up to the locally heaviest edge weight.
%Second, on the theoretical side, it is useful and common in the literature to directly compare additive and multiplicative spanners: for example, the original $+2$ spanner by Aingworth, Chekuri, Indyk, and Motwani \cite{Aingworth99fast} gained traction because it is gives a \emph{strictly better} error guarantee for all node pairs than the $\cdot 3$ multiplicative spanner, despite having the same $O(n^{3/2})$ sparsity, at least for unweighted input graphs (this point is discussed further in \cite{Elkin:2004:SCG:976327.984900,baswana2010additive}).
%Thus, the ``right'' extension of additive spanners to weighted graphs ought to have the property that a $+2$ weighted additive spanner still gives a strictly better error guarantee than a (multiplicative) 3-spanner for all node pairs, even for weighted input graphs.
%This is indeed the case for a $+2$ \emph{local} additive spanner, but not for a $+2$ \emph{global} additive spanner.

\paragraph{Sparse Local Additive Spanners.}
Weighted additive spanners were first studied by Elkin et al.~\cite{elkin2019almost}, who gave a local $+2W(\cdot,\cdot)$ spanner with $O(n^{3/2})$ edges, as well as a ``mixed'' spanner with a similar error type. Ahmed, Bodwin, Sahneh, Kobourov, and Spence~\cite{ahmed2020weighted} gave a comprehensive study of weighted additive spanners, including a \emph{global} $+4W$ spanner with $\Oish(n^{7/5})$ edges and $+8W$ spanner with $O(n^{4/3})$ edges, analogous to the previously-known unweighted constructions.
The $+6$ unweighted error vs.\ $+8W$ global weighted error left a gap to be closed; this was mostly closed in a recent follow-up work of Elkin et al.~\cite{elkin2020improved}, who gave a \emph{local} all-pairs $+(6+\eps)W(\cdot,\cdot)$ spanner on $O_{\eps}(n^{4/3})$ edges\footnote{We use $O_{\eps}(f(n))$ as shorthand for $O(\text{poly}(\frac{1}{\eps})f(n))$.} by generalizing the $+6$ spanner by Knudsen~\cite{knudsen2014additive}, and a subsetwise $+(2+\eps)W(\cdot,\cdot)$ spanner on $O_{\eps}(n\sqrt{|\Sc|})$ edges~\cite{elkin2020improved} (following a similar $+2$ subsetwise spanner in unweighted graphs by Elkin (unpublished), later published in~\cite{Pettie09,Cygan13}).
Our first contribution is the improvement of several remaining known constructions of weighted additive spanners from global to local error.
%\begin{restatable}{thm}{twopluseps}
%\label{thm:2pluseps}
%Every weighted graph $G$ and vertex subset $\Sc$ has a subsetwise $+(2+\eps)W(\cdot,\cdot)$ spanner on $O_{\eps}(n \sqrt{|\Sc|})$ edges for all $\eps > 0$.
%\end{restatable}

%\begin{definition} [Pairwise Spanners]
%Given a graph $G = (V, E, w)$ and a set of demand pairs $P \subseteq V \times V$, a subgraph $H$ is a local $+k$ pairwise spanner if it satisfies $\dist_H(s, t) \le \dist_G(s, t) + kW(s, t)$ for all $(s, t) \in P$ (where again $W(s, t)$ is the maximum edge weight along the original shortest $s \leadsto t$ path in $G$).
%When $P = S \times S$ for a vertex subset $S$, we also say that $H$ is a local $+k$ subset spanner of $G, S$.
%\end{definition}

\begin{restatable}{thm}{sparse}
\label{thm:sparse}
Let $\eps > 0$. Then every weighted graph $G$ and set $\Pc \subseteq V \times V$ of vertex pairs has:
\begin{enumerate}\setlength{\itemsep}{2pt}
    \item a deterministic pairwise $+(2+\eps)W(\cdot,\cdot)$ spanner on $O_{\eps}(n|\Pc|^{1/3})$ edges, \label{thm:pairwise-2epsW}
    \item a deterministic pairwise $+(6+\eps)W(\cdot,\cdot)$ spanner on $O_{\eps}(n|\Pc|^{1/4})$ edges, \label{thm:pairwise-6epsW}
    \item a pairwise $+2W(\cdot,\cdot)$ spanner on $O(n|\Pc|^{1/3})$ edges, \label{thm:pairwise-2W}
    \item a pairwise $+4W(\cdot,\cdot)$ spanner on $O(n|\Pc|^{2/7})$ edges, and \label{thm:pairwise-4W}
    \item an all-pairs $+4W(\cdot,\cdot)$ spanner on $\Oish(n^{7/5})$ edges. \label{thm:allpairs-4W}
\end{enumerate}
\end{restatable}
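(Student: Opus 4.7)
The plan is to establish each part by adapting a known unweighted pairwise (or all-pairs) additive spanner construction to the weighted local-error regime. The common template is: (i) identify a set $C$ of cluster centers, (ii) initialize the spanner with a shortest-path tree rooted at every $c \in C$ (replacing the BFS trees used in the unweighted versions), and (iii) add a shortcut path for each pair in $\Pc$ according to a pair-specific rule. The correctness analysis hinges on bounding, for each $(s,t) \in \Pc$, how many edges of the shortest path $\pi(s,t)$ are bypassed by the recovered detour; since each such edge has weight at most $W(s,t)$, a detour that avoids $k$ edges of $\pi(s,t)$ costs at most $k \cdot W(s,t)$ in additive error.

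For the randomized parts \ref{thm:pairwise-2W}, \ref{thm:pairwise-4W}, and \ref{thm:allpairs-4W}, I would sample centers at densities chosen to produce the claimed edge budgets, mirroring the unweighted pairwise $+2$ and pairwise $+4$ constructions and the all-pairs $+4$ construction of Chechik (in its simplified form from \cite{bodwin2020note}). Because the spanner contains a full shortest-path tree from every center, distances among centers are exact, so the slack appears only near the endpoints $s,t$, where the detour replaces a constant number of edges of $\pi(s,t)$ and thus incurs the claimed $2W(s,t)$ or $4W(s,t)$ additive error.

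For the deterministic parts \ref{thm:pairwise-2epsW} and \ref{thm:pairwise-6epsW}, I would derandomize by partitioning edges into $O(\log_{1+\eps} n)$ geometric weight buckets $[(1+\eps)^i, (1+\eps)^{i+1})$ and applying a deterministic Roditty--Thorup--Zwick style covering within each bucket. The additional $\eps W(\cdot,\cdot)$ slack arises because detours within weight level $i$ may use replacement edges heavier than the originals by a factor of $1+\eps$; summing over levels contributes a $\mathrm{poly}(1/\eps)$ factor that is absorbed into $O_\eps$. The underlying unweighted templates are the pairwise $+2$ construction (for part \ref{thm:pairwise-2epsW}) and the pairwise $+6$ construction (for part \ref{thm:pairwise-6epsW}).

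The main obstacle I expect is the deterministic $+(6+\eps)W(\cdot,\cdot)$ case (part \ref{thm:pairwise-6epsW}). The unweighted pairwise $+6$ spanner relies on a multi-round path-buying argument with a delicate ``value versus cost'' accounting, and porting this to local error requires showing that each bought detour remains close to $\pi(s,t)$ not just in hop count but also in weight. This will likely force the weight bucketing to be interleaved with the path-buying rather than applied as a black box. Once this is handled, the remaining parts should follow analogously from the shortest-path-tree replacement, together with a uniform analysis of how many edges of $\pi(s,t)$ each type of detour can avoid.
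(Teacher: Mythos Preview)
Your proposal has a genuine gap in the correctness analysis that affects every part. You claim the additive error arises because ``the detour replaces a constant number of edges of $\pi(s,t)$'' and each such edge has weight at most $W(s,t)$. But that is not where the error comes from. When you route $s \to c \to t$ through a center $c$ adjacent to some $y \in \pi(s,t)$, the excess is $d_G(s,c)+d_G(c,t)-d_G(s,t) \le 2w(cy)$, where $cy$ is an \emph{off-path} edge. For local error you must guarantee $w(cy) \le W(s,t)$, and nothing in your clustering-plus-SP-tree template enforces this: a randomly sampled center may be adjacent to $\pi(s,t)$ only via a very heavy edge, in which case the detour gives global $+cW$ error, not local $+cW(s,t)$ error. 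The paper handles this by replacing clustering with \emph{$d$-light initialization} (add the $d$ lightest incident edges at every vertex) and proving that any missing edge on $\pi(s,t)$ has $\Omega(d)$ neighbors reachable via edges of weight at most $W(s,t)$; the sampled centers are then drawn to hit this set of light neighbors.

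For the deterministic parts \ref{thm:pairwise-2epsW} and \ref{thm:pairwise-6epsW}, your weight-bucketing-plus-RTZ-covering sketch is also off target. The paper does not bucket edge weights at all; instead it performs $d$-light initialization, adds the first and last $\ell$ missing edges of each $\pi(s,t)$, and then runs greedy spanner completion in nondecreasing order of $W(s,t)$. The size bound comes from a potential argument: whenever $\pi(s,t)$ is added, $\Omega(d\ell)$ or $\Omega(d^2\ell)$ ordered pairs have their $H$-distance drop by at least $\tfrac{\eps}{2}W(s,t)$, and once a pair is ``set off'' it can only be improved $O(1/\eps)$ times because later paths have larger $W(\cdot,\cdot)$. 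This is where the $\eps$ dependence enters, not from rounding edge weights. Your bucketing idea would need to explain how to stitch together a single shortest path that crosses many weight classes, which you flag as the main obstacle but do not resolve; the paper's argument sidesteps this entirely.
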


Unweighted versions of these results were proved in \cite{Kavitha15,Kavitha2017,bodwin2020note}, and weighted versions with global error but without $\eps$ dependence were proved in \cite{ahmed2020weighted}.
This theorem is the first to provide versions with local error. Moreover, the first two pairwise constructions are deterministic, unlike the randomized constructions from~\cite{ahmed2020graphElsevier}.
%The pairwise $+(2+\eps)W(\cdot,\cdot)$ and $+(6+\eps)W(\cdot, \cdot)$ spanners follow the method of Elkin et al.~\cite{elkin2020improved}, except with an additional preprocessing step in the construction.
Together with \cite{elkin2019almost,elkin2020improved}, the above results complete the task of converting unweighted additive spanners to weighted additive spanners with local error; see Tables~\ref{table:additive} and~\ref{table:additive-pairwise}.

\begin{table}[h]
\centering
\begin{tabular}{|c|c|c||c|c|c|} \hline
\multicolumn{3}{|c}{Unweighted} & \multicolumn{3}{c|}{Weighted} \\ \hline
$+\beta$ & Size & Ref. & $+\beta$ & Size & Ref. \\ \hline
$+2$ & $O(n^{3/2})$ & \cite{Aingworth99fast, baswana2010additive, Cygan13, knudsen2014additive} & $+2W(\cdot, \cdot)$ & $O(n^{3/2})$ & \cite{elkin2019almost} \\ \hline
$+4$ & $\Oish(n^{7/5})$ & \cite{chechik2013new} & $+4W(\cdot,\cdot)$ & $\Oish(n^{7/5})$ & \textbf{[this paper]} \\ \hline
$+6$ & $O(n^{4/3})$ & \cite{baswana2010additive, knudsen2014additive} & $+(6+\eps)W(\cdot, \cdot)$ & $O_{\eps}(n^{4/3})$ & \cite{elkin2020improved} \\ \hline
$+n^{o(1)}$ & $\Omega(n^{4/3 - \eps})$ & \cite{abboud20174} &  &  & \\ \hline
\end{tabular}
\caption{All-pairs additive spanner constructions for unweighted and weighted graphs.}
\label{table:additive}
\end{table}

\begin{table}[h]
\centering
\begin{tabular}{|l|c|c|c||c|c|c|} \hline
\multicolumn{4}{|c}{Unweighted} & \multicolumn{3}{c|}{Weighted} \\ \hline
Type & $+\beta$ & Size & Ref. & $+\beta$ & Size & Ref. \\ \hline
Subset & $+2$ & $O(n\sqrt{|\Sc|})$ & \cite{Cygan13} & $+(2+\eps)W(\cdot,\cdot)$ & $O_{\eps}(n\sqrt{|\Sc|})$ & \cite{elkin2020improved} \\ \hline
Pairwise & $+2$ & $O(n|\Pc|^{1/3})$ & \cite{Kavitha15,censor2016distributed} & $+2W(\cdot,\cdot)$ & $O(n|\Pc|^{1/3})$ & \textbf{[this paper]} \\ \hline
Pairwise & $+4$ & $\Oish(n|\Pc|^{2/7})$ & \cite{Kavitha2017} & $+4W(\cdot,\cdot)$ & $O(n|\Pc|^{2/7})$ & \cite{ahmed2021weighted} \\ \hline
Pairwise & $+6$ & $O(n|\Pc|^{1/4})$ & \cite{Kavitha2017} & $+(6+\eps)W(\cdot,\cdot)$ & $O_{\eps}(n|\Pc|^{1/4})$ & \textbf{[this paper]} \\ \hline
\end{tabular}
\caption{Pairwise and subsetwise (purely) additive spanner constructions for unweighted and weighted graphs.}
\label{table:additive-pairwise}
\end{table}

\paragraph{Lightweight Local Additive Spanners.} All of the aforementioned results are in terms of the number of edges $|E(H)|$ of the spanner. If minimizing the total edge weight is more desirable than constructing a sparse spanner, a natural problem is to construct \emph{lightweight} spanners. Given a connected graph $G=(V,E)$ with positive edge weights $w:E \to \mathbb{R}^+$, the \emph{lightness} of a subgraph $H$ is defined by
\begin{equation}\label{eqn:lightness}
    \text{lightness}(H) := \frac{w(H)}{w(\MST(G))}
\end{equation}
where $w(H)$ and $w(\MST(G))$ are the sum of edge weights in $H$ and an MST of $G$, respectively. Section~\ref{sec:lightweight} highlights why none of the aforementioned sparse spanners have good lightness guarantees. Our second contribution is the following:
\begin{restatable}{thm}{lightness} \label{thm:lightness}
Let $\eps > 0$. Then every weighted graph $G$ has:
\begin{enumerate}\setlength{\itemsep}{2pt}
    \item a deterministic all-pairs $+\eps W(\cdot,\cdot)$ spanner with $O_{\eps}(n)$ lightness, and \label{thm:lightness-epsW}
    \item a deterministic all-pairs $+(4+\eps)W(\cdot,\cdot)$ spanner with $O_{\eps}(n^{2/3})$ lightness. \label{thm:lightness-4epsW}
\end{enumerate}
\end{restatable}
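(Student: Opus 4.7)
My plan is to construct both spanners by combining geometric weight-scale decomposition with two known construction paradigms: the light $(1+\eps)$-multiplicative spanner framework of Chechik and Wulff-Nilsen for Part~\ref{thm:lightness-epsW}, and the cluster-based additive spanner construction re-parameterized for lightness for Part~\ref{thm:lightness-4epsW}. In both cases the MST of $G$ serves as a backbone that handles long, low-weight portions of shortest paths.

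For Part~\ref{thm:lightness-epsW}, I would partition $E$ into geometric weight classes $E_i = \{e : w(e) \in [(1+\mu)^i, (1+\mu)^{i+1})\}$ for $\mu = \Theta(\eps)$, and within each class build a light, scale-aware subgraph $H_i$ preserving distances between endpoints of $E_i$-edges up to additive $O(\mu) \cdot (1+\mu)^i$ error. Taking $H = T \cup \bigcup_i H_i$, the lightness bound $O_\eps(n)$ would follow by charging each $H_i$ against its contribution to $T$ via a telescoping argument over scales. The error bound would follow by splitting $\pi(s,t)$ at the heaviest edge $e^* \in E_{i^*}$: the scale-$i^*$ portion is approximated via $H_{i^*}$ to within $\eps W(s,t)$, while lower-scale portions are handled by $T$ and lower-scale $H_j$'s with negligible cumulative error.

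For Part~\ref{thm:lightness-4epsW}, I would adapt the $+4W(\cdot,\cdot)$ sparse construction (Theorem~\ref{thm:sparse}, item~\ref{thm:allpairs-4W}) by re-tuning the clustering size to optimize lightness rather than edge count. The sparse construction balances clusters to achieve $\Oish(n^{7/5})$ edges; for lightness, I would use $\Theta(n^{2/3})$ clusters of size $\Theta(n^{1/3})$ so that each cluster's internal tree contributes weight bounded by its diameter, with an inter-cluster bridging structure analogous to the sparse version contributing $O_\eps(n^{2/3})$ lightness against $w(\MST(G))$.

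The main obstacle is the error analysis for Part~\ref{thm:lightness-epsW}. The tempting candidate---the standard greedy $(1+\eps)$-spanner---does \emph{not} satisfy $+\eps W(\cdot,\cdot)$ in general: on a unit path $v_0, v_1, \ldots, v_{2k}$ augmented with weight-$(2-\eps/3)$ shortcut edges $(v_{2i}, v_{2i+2})$, greedy keeps only the path, yet the additive error for the endpoint pair grows linearly in $k$ while $\eps W(v_0,v_{2k})$ is constant. The construction must therefore be genuinely scale-aware rather than weight-ordered-greedy, and a naive induction on subpaths at the heaviest edge fails to close because subpath errors on both sides of $e^*$ can each reach $\eps W(s,t)$, compounding geometrically.
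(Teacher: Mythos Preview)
Your proposal has genuine gaps in both parts, and the paper's route is quite different.

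For Part~\ref{thm:lightness-epsW}, you yourself flag that the error analysis does not close: splitting $\pi(s,t)$ at its heaviest edge and recursing on both sides accumulates error geometrically, and you offer no mechanism to prevent this. The weight-scale decomposition does not obviously help either, since within a single class $E_i$ you still need a subgraph that is simultaneously light and additively accurate, which is the original problem restricted to a narrow weight band. The paper avoids scale decomposition entirely. After normalizing so that $w(\MST(G)) = n/2$ and deleting edges of weight $\ge n$, it \emph{subdivides} every MST edge so that each piece has weight $\le 1$, yielding an auxiliary graph $G'$ with $O(n)$ vertices. The spanner is then just the MST plus greedy completion (add $\pi(s,t)$ whenever $(s,t)$ is still unsatisfied, pairs processed in nondecreasing $W(s,t)$). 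Lightness is bounded by a potential argument in $G'$: adding a path of weight $z$ improves the $H$-distance of $\Omega(\eps z)$ vertex pairs of $G'$ by at least $\Omega(\eps W(s,t))$ each (Lemma~\ref{lemma:lightweight-neighborhood}), and since there are only $O(n^2)$ pairs and each can be improved $O(1)$ times after being set off, the total added weight is $O(n^2/\eps)$, i.e.\ $O_\eps(n)$ lightness. The subdivision is the key trick that converts \emph{weight} into \emph{vertex count} so that the set-off/improve bookkeeping from the sparse setting can be reused.

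For Part~\ref{thm:lightness-4epsW}, ``re-tuning the clustering size'' in the sparse $+4W(\cdot,\cdot)$ construction does not give a lightness bound: that construction adds shortest-path trees rooted at sampled vertices and shortest paths between pairs of samples, and while these have bounded \emph{edge count}, their \emph{weight} can be $\Omega(n)\cdot w(\MST(G))$ regardless of how many samples you choose. The paper instead introduces $d$-\emph{lightweight initialization}: starting from the MST, for each vertex add its lightest incident non-MST edges until their total weight reaches $d$. This has lightness $O(1+d)$ by construction. Spanner completion is then analyzed via the same subdivision-and-potential argument, now using that any missing edge has $\Omega(d^{1/2})$ close-by vertices in $G'$ (Lemma~\ref{lemma:lightweight-neighborhood-1}), which yields $\Omega(\eps\, d^{1/2} z)$ improvements per weight-$z$ path added and hence total completion weight $O_\eps(n^2/d^{1/2})$. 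Setting $d = n^{2/3}$ balances initialization against completion to give $O_\eps(n^{2/3})$ lightness.
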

To the best of our knowledge, these are the first nontrivial lightness results known for additive spanners.
For comparison on the first result, it is easy to show that every graph $G$ has an all-pairs distance preserver $H$ with $\text{lightness}(H) = O(n^2)$.
It follows from the seminal work of Khuller, Raghavachari, and Young \cite{khuller1995balancing} on shallow-light trees that every graph $G$ has a subgraph $H$ that preserves distances up to a $(1+\eps)$ multiplicative factor, with lightness $O_{\eps}(n)$.
Our first result implies that the same lightness bound (up to the specifics of the $\eps$ dependence) can be achieved with \emph{additive} error. Theorem~\ref{thm:lightness}.\ref{thm:lightness-epsW} strictly strengthens this consequence of \cite{khuller1995balancing}, since local $+\eps W(\cdot,\cdot)$ error implies multiplicative $(1+\eps)$ stretch as previously mentioned.
Additionally, Theorem~\ref{thm:lightness}.\ref{thm:lightness-epsW} is tight, in the sense that an unweighted complete graph $K_n$ has lightness $\Theta(n)$ and no nontrivial $+\eps W(\cdot, \cdot)$ spanner. Table~\ref{table:additive-lightweight} summarizes these lightness results compared with those in unweighted graphs.

\begin{table}
\centering
\begin{tabular}{|c|c||c|c|c|} \hline
\multicolumn{2}{|c||}{Unweighted} &
\multicolumn{3}{c|}{Weighted} \\ \hline
$+\beta$ & Lightness & $+\beta$ & Lightness & Ref. \\ \hline
0 & $O(n)$ & $+\eps W(\cdot,\cdot)$ & $O_{\eps}(n)$ & \textbf{[this paper]} \\ \hline
$+2$ & $O(n^{1/2})$ &  & ? & \\ \hline
$+4$ & $\Oish(n^{2/5})$ & $+(4+\eps)W(\cdot,\cdot)$ & $O_{\eps}(n^{2/3})$ & \textbf{[this paper]} \\ \hline
$+6$ & $O(n^{1/3})$ & & ? & \\ \hline
\end{tabular}
\caption{Lightweight all-pairs additive spanners in unweighted and weighted graphs.}
\label{table:additive-lightweight}
\end{table}

The lightweight spanners are based on a new initialization technique which we call $d$-\emph{lightweight initialization}, in which an initial set of lightweight edges is added to the spanner starting from the MST of the input graph. Nearly all of the above spanners have a common theme in the construction and analysis: add an initial set of edges oblivious to the distances in the graph, then add shortest paths for any vertex pairs which do not satisfy inequality~\eqref{eqn:spanner-ineq}. The size or lightness bounds are then analyzed by determining how many pairs of nearby vertices there are whose distances sufficiently improve upon adding a shortest path; this method was also used by Elkin et al.~\cite{elkin2020improved}. Note that such improvements in unweighted graphs must be by at least 1; in weighted graphs, distances may improve by arbitrarily small amount which leads to the $\eps$ dependence. %As open questions, it would be interesting to improve these lightness bounds, construct lightweight pairwise or subsetwise spanners, or construct weighted additive spanners that are simultaneously sparse \emph{and} lightweight.
We leave as open questions the lightness bounds for $+2W$ and $+6W$ spanners (with or without $\eps$ dependence), whether the lightness for the $+(4+\eps)W(\cdot,\cdot)$ spanner can be improved to $\Oish(n^{2/5})$, and whether we can construct additive spanners in weighted graphs which are simultaneously sparse \emph{and} lightweight.

\section{Preliminaries}
Many additive spanner constructions begin with either a clustering~\cite{baswana2010additive,Cygan13,Kavitha2017} or initialization phase~\cite{knudsen2014additive,ahmed2020weighted,elkin2020improved}, where an initial set of edges is added to the spanner oblivious to distances or vertex pairs $\Pc$ in the graph; after this phase, additional edges or paths are added so that the resulting subgraph is a valid spanner. Experimental results~\cite{ahmed2021multi} suggest that initialization is preferred over clustering in terms of runtime and spanner size, and all constructions in this paper are initialization-based.
%The initialization phase, which we call $d$-\emph{initialization}, adds $d \ge 1$ incident edges to the spanner for every vertex in $V$ (or all incident edges if the degree is at most $d$), thereby adding at most $nd$ edges. %The $d$-initialization roughly preserves shortest paths that have low-degree vertices along these paths; further, if a shortest $s$--$t$ path $\pi(s,t)$ is missing an edge $xy$, then $x$ and $y$ must have degree $\ge d$ in $G$ and be adjacent to at least $d$ vertices in the initialization (otherwise $xy$ would have been added, a contradiction); this fact is often exploited in the second phase of the construction.The idea of $d$-initialization was previously limited to unweighted graphs~\cite{Cygan13,knudsen2014additive,Kavitha15,Kavitha2017,chechik2013new}.
For weighted graphs, a $d$-\emph{light initialization} of $G$ is a subgraph obtained by selecting the $d$ lightest edges incident to every vertex, or all edges if the degree is less than $d$. We exploit the following lemma from~\cite{ahmed2020weighted}:

\begin{lemma}[\cite{ahmed2020weighted}]
\label{lemma:d_initialize}
Let $H$ be a $d$-light initialization
of a weighted graph $G$, and let $\pi(s,t)$ be a shortest path in $G$. If there are $\ell$ edges of $\pi(s,t)$ absent from $H$, then there is a set $N$ of $\frac{d\ell}{6} = \Omega(d\ell)$ vertices, where each vertex in $N$ is adjacent to a vertex on $\pi(s,t)$, connected via an edge of weight at most $W(s,t)$.
\end{lemma}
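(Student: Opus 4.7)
My plan is to build the set $N$ as the union of ``light $H$-neighborhoods'' at missing-edge endpoints, and to lower-bound $|N|$ by a double-counting argument that uses the shortest-path structure of $\pi(s,t)$ to control overlap.

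I would first extract the key consequence of $d$-light initialization. Fix a missing edge $e = (u,v) \in \pi(s,t) \setminus E(H)$. Since $H$ keeps the $d$ lightest edges at every vertex, $e \notin H$ means $e$ is not among the $d$ lightest edges at $u$, so $u$ has at least $d$ neighbors in $H$ via edges of weight at most $w(e) \le W(s,t)$; the same holds for $v$. For each missing endpoint $x$, write $L_x$ for its $d$ lightest $H$-neighbors; then $|L_x| = d$ and every edge from $x$ to $L_x$ has weight at most $W(s,t)$, and in fact at most the weight of any missing edge incident to $x$.

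Next, let $V_M \subseteq V(\pi(s,t))$ denote the set of path vertices incident to at least one missing edge, and set
\[
N \;:=\; \bigcup_{x \in V_M} L_x.
\]
Since $\ell$ missing edges touch at least $\ell + 1$ distinct vertices, $|V_M| \ge \ell$, and by construction each $y \in N$ is adjacent to some path vertex via an $H$-edge of weight at most $W(s,t)$. Counting incidences gives $\sum_{x \in V_M} |L_x| = d \cdot |V_M| \ge d\ell$, so it suffices to show that the multiplicity $m(y) := |\{x \in V_M : y \in L_x\}|$ is bounded by $6$ for every $y$.

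The main obstacle is establishing this bound on $m(y)$, and it is here that the shortest-path structure is crucial. If $y \in L_{x_1} \cap L_{x_2}$ for two missing endpoints $x_1, x_2 \in V_M$, then the triangle inequality through $y$, together with the bound from the first paragraph, gives
\[
d_G(x_1, x_2) \;\le\; w(y, x_1) + w(y, x_2) \;\le\; w(e_{x_1}) + w(e_{x_2}) \;\le\; 2W(s,t),
\]
where $e_{x_i}$ denotes a missing edge incident to $x_i$. Because $x_1, x_2$ lie on the shortest path $\pi(s,t)$ with consistent tie-breaking, the $\pi$-subpath between them realizes $d_G(x_1, x_2)$. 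A short case analysis on the orientations of the missing edges incident to each $x_i$ --- whether those missing edges lie before or after $x_i$ on $\pi$ --- then shows that when the missing edges incident to $x_1$ and $x_2$ both fall inside the $\pi$-subpath between them, that subpath is forced to equal the concatenation of those two missing edges, so $x_2$ sits within two path-edges of $x_1$; the remaining orientation cases are handled analogously by pairing endpoints with their missing edges. A bookkeeping of these cases confines the missing endpoints sharing any given $y$ to a short run of consecutive path vertices, yielding $m(y) \le 6$ and hence $|N| \ge d |V_M| / 6 \ge d\ell / 6$, as required.
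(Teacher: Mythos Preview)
Your overall strategy---setting $N = \bigcup_{x \in V_M} L_x$ and lower-bounding $|N|$ by double counting---does match how the paper describes the construction from \cite{ahmed2020weighted}. The gap is in the multiplicity step: the claim $m(y)\le 6$ is false in weighted graphs. Consider a shortest path whose missing edges $e_1,\ldots,e_k$ (with $s$-side endpoints $u_1,\ldots,u_k$) are separated by very light non-missing edges and carry geometrically increasing weights $w(e_i)\approx 2^{i}$. Attach a single vertex $y$ to each $u_i$ with $w(y,u_i)$ just below $w(e_i)$. The shortest-path constraint $d_\pi(u_a,u_c)\le w(y,u_a)+w(y,u_c)$ becomes roughly $2^{c}-2^{a}\le 2^{a}+2^{c}$, which holds for every $a<c$, so $\pi$ remains a shortest path. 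Hanging $d-2$ pendant edges of weight slightly below $w(e_i)$ on each $u_i$ (and similarly on each $v_i$) then pushes $e_i$ out of the $d$ lightest at both endpoints while keeping $(y,u_i)$ among them; hence $y\in L_{u_i}$ for all $i$ and $m(y)=k$, which can be made as large as you like.

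Your case analysis only treats the configuration where the missing edges at $x_1$ and $x_2$ both point \emph{into} the subpath between them; the assertion that ``the remaining orientation cases are handled analogously by pairing endpoints with their missing edges'' is exactly what fails here, since in the example every missing edge $e_i$ at $u_i$ points toward $t$, away from all earlier $u_a$, and no endpoint pairing recovers a constant bound. The unweighted fact that a vertex can neighbour at most three vertices of a shortest path has no constant-factor analogue once edge weights vary geometrically. The lemma is still true, but the argument in \cite{ahmed2020weighted} cannot be a pointwise bound on $m(y)$; a different accounting (e.g.\ first thinning to a well-separated subfamily of missing edges whose light neighbourhoods are genuinely disjoint, or a charging scheme tied to edge weights) is needed to extract the $d\ell/6$ lower bound.
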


\begin{comment}
\begin{figure}[h]
\centering
  \includegraphics[width=.60\textwidth]{figures/d_init_labeled.png}
\caption{Illustrating the idea behind $d$-initialization. Here, the straight path from $s$ to $t$ is the shortest. Intuitively it seems that two possible scenarios may arise after $d$-initialization: we have many distinct nodes around the shortest path (top figure) and we do not have that many nodes because some neighbor nodes (like $x$) are shared multiple times (bottom figure). But the second scenario leads to a contradiction: if $x$ has been shared more than three times then one can bypass through $x$ from $s$ to $t$ which will provide a shorter path than the shortest path.}
\label{fig:d_init}
\end{figure}
\end{comment}

%\begin{figure}[h]
%\centering
%  \includegraphics[width=.60\textwidth]{figures/d_init_local.png}
%\caption{Illustrating $d$-light initialization for $d=4$. Here the number of missing edges $\ell=3$ which are represented by dashed lines. For each vertex $v$, the edges added in the initialization phase are lighter than adjacent missing edges.}
%\label{fig:d_init}
%\end{figure}

We refer to vertices in $N$ as the $d$-\emph{light neighbors} of $\pi(s,t)$. The fact that $d$-light neighbors are connected to $\pi(s,t)$ via light edges of weight $\le W(s,t)$ was not explicitly stated in~\cite{ahmed2020weighted} but follows directly from the proof, as $N$ is constructed by taking the $d$ lightest edges incident to vertices on $\pi(s,t)$ which are incident to a missing edge of weight at most $W(s,t)$.
After $d$-light initialization, additional edges or paths are added to $H$ in order to ``satisfy'' the remaining unsatisfied vertex pairs\footnote{A vertex pair $(s,t)$ is \emph{satisfied} if the spanner inequality~\eqref{eqn:spanner-ineq} holds for that pair.}.
%; this second phase is often referred to as the path-buying phase or $+\beta$ spanner completion~\cite{knudsen2014additive}. In the path-buying phase, shortest paths are typically added if the number of vertex pairs which become satisfied outweigh the cost of adding the path (measured by the number of edges added). Alternatively, the second phase may be a $+\beta$ spanner completion, where we iterate over each vertex pair and add $\pi(s,t)$ to the spanner if $(s,t)$ is unsatisfied.
We will use the standard method of $+\beta$ \emph{spanner completion}, where we iterate over each vertex pair $(s,t)$ in nondecreasing order of maximum weight $W(s,t)$ (then by nondecreasing distance $d_G(s,t)$ in case of a tie) and add $\pi(s,t)$ to the spanner if $(s,t)$ is unsatisfied.

\section{Sparse local additive spanners} \label{sec:pairwise}
Instead of $+cW$ additive error considered in~\cite{ahmed2020weighted}, we consider $+cW(\cdot, \cdot)$ local error.

\subsection{Pairwise $+(2+\eps)W(\cdot,\cdot)$ and $+(6+\eps)W(\cdot,\cdot)$ spanners}
For the pairwise $+(2+\eps)W(\cdot,\cdot)$ and $+(6+\eps)W(\cdot,\cdot)$ spanners (Theorem~\ref{thm:sparse}.\ref{thm:pairwise-2epsW}-\ref{thm:pairwise-6epsW}), we describe a deterministic construction. %The analysis behind the edge bound uses the same set-off and improving strategy as in~\cite{elkin2020improved}. 
The analysis behind the edge bounds uses a set-off and improving strategy also used in~\cite{elkin2020improved}. The constructions involve one additional step of adding a certain number of edges along every vertex pair's shortest path before spanner completion.

Let $\ell, d \ge 1$ be parameters which are defined later. Let $H$ be a $d$-light initialization. Then for each vertex pair $(s,t) \in \Pc$, consider the shortest path $\pi(s,t)$ in $G$ and add the first $\ell$ missing edges and the last $\ell$ missing edges to $H$ (if $\pi(s,t)$ is missing at most $2\ell$ edges, all missing edges from $\pi(s,t)$ are added to $H$). We remark that if $(s,t)$ is already satisfied, we can skip this step for the pair $(s,t)$.

After this phase, we perform $+(2+\eps)W(\cdot,\cdot)$ or $+(6+\eps)W(\cdot,\cdot)$ spanner completion as follows: for each pair $(s,t) \in \Pc$ sorted in nondecreasing order of maximum weight $W(s,t)$, if $(s,t)$ is still unsatisfied, add the remaining edges from the path $\pi(s,t)$ to $H$. This construction clearly outputs a valid pairwise spanner, and $O(nd + \ell |\Pc|)$ edges are added in the ``distance-oblivious'' phase before spanner completion. It remains to determine the number of edges added in spanner completion.

Let $(s,t) \in \Pc$ be a vertex pair for which $\pi(s,t)$ is added to $H$ during spanner completion. Observe that after $d$-light initialization, the first $\ell$ missing edges and the last $\ell$ missing edges from $\pi(s,t)$ do not overlap; otherwise no remaining edges from $\pi(s,t)$ would have been added to $H$. Let $u_1v_1$, \ldots, $u_{\ell}v_{\ell}$ denote the first $\ell$ missing edges on $\pi(s,t)$ which are added after $d$-light initialization, and let $u_1'v_1'$, \ldots, $u_{\ell}'v_{\ell}'$ denote the last $\ell$ missing edges, where $u_i$ (or $u_i'$) is closer to $s$ than $v_i$ (or $v_i'$); see Fig.~\ref{fig:6W-spanners} for illustration.

%\vspace{-2em}

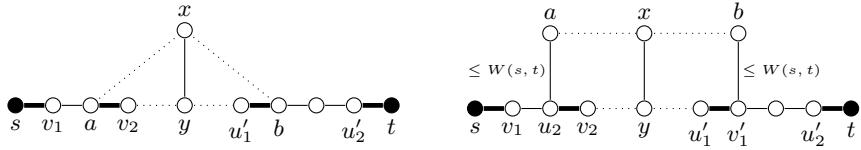
\begin{figure}
    \centering
    \begin{minipage}{0.49\textwidth}
\begin{tikzpicture}
    \node (s) at (0,0) {};
    \node (t) at (5,0) {};
    \node (v1) at (0.5,0) {};
    \node (v2) at (1,0) {};
    \node (v3) at (1.5,0) {};
    \node (v4) at (3,0) {};
    \node (v5) at (3.5,0) {};
    \node (v6) at (4,0) {};
    \node (v7) at (4.5,0) {};

    \node (x) at (2.25,1) {};
    \node (y) at (2.25,0) {};
    
    \foreach \x in {s,t}{
        \draw [fill=black] (\x) circle [radius=0.1];
    }
    \foreach \x in {s,t}{
        \node[below=2pt] at (\x) {$\x$};
    }
    
    \draw[dotted] (v3)--(v4);
    \foreach \x in {v1,v2,v3,v4,v5,v6,v7,x,y}
    	\draw[fill=white] (\x) circle [radius=0.1];
    
    \draw[line width=1.8pt] (s)--(v1);
    \draw[line width=1.8pt] (v2)--(v3);
    \draw[line width=1.8pt] (v4)--(v5);
    \draw[line width=1.8pt] (v7)--(t);
    \draw (v1)--(v2);
    \draw (v5)--(v6)--(v7);

    \draw (x)--(y);
    \draw[dotted] (v2)--(x)--(v5);

    \node[below=2pt] at (v1) {$v_1$};
    \node[below=2pt] at (v2) {$a$};
    \node[below=2pt] at (v3) {$v_2$};
    \node[below=2pt] at (v4) {$u_1'$};
    \node[below=2pt] at (v5) {$b$};
    \node[below=2pt] at (v7) {$u_2'$};
    \node[below=2pt] at (y) {$y$};
    \node[above=2pt] at (x) {$x$};
    
    %first and last edges in P_{uv} should not yet be included-RCS
\end{tikzpicture}
\end{minipage}
\begin{minipage}{0.49\textwidth}
\begin{tikzpicture}
    \node (s) at (0,0) {};
    \node (t) at (5,0) {};
    \node (v1) at (0.5,0) {};
    \node (v2) at (1,0) {};
    \node (v3) at (1.5,0) {};
    \node (v4) at (3,0) {};
    \node (v5) at (3.5,0) {};
    \node (v6) at (4,0) {};
    \node (v7) at (4.5,0) {};
    
    \node (a) at (1,1) {};
    \node (b) at (3.5,1) {};
    \node (x) at (2.25,1) {};
    \node (y) at (2.25,0) {};
    
    \foreach \x in {s,t}{
        \draw [fill=black] (\x) circle [radius=0.1];
    }
    \foreach \x in {s,t}{
        \node[below=2pt] at (\x) {$\x$};
    }
    
    \draw[dotted] (v3)--(v4);
    \draw[dotted] (a)--(b);
    \foreach \x in {v1,v2,v3,v4,v5,v6,v7,a,b,x,y}
    	\draw[fill=white] (\x) circle [radius=0.1];
    
    \draw[line width=1.8pt] (s)--(v1);
    \draw[line width=1.8pt] (v2)--(v3);
    \draw[line width=1.8pt] (v4)--(v5);
    \draw[line width=1.8pt] (v7)--(t);
    \draw (v1)--(v2);
    \draw (v5)--(v6)--(v7);
    \draw (v2)--(a);
    \draw (b)--(v5);
    \draw (x)--(y);

    \node[below=2pt] at (v1) {$v_1$};
    \node[below=2pt] at (v2) {$u_2$};
    \node[below=2pt] at (v3) {$v_2$};
    \node[below=2pt] at (v4) {$u_1'$};
    \node[below=2pt] at (v5) {$v_1'$};
    \node[below=2pt] at (v7) {$u_2'$};
    \node[below=2pt] at (y) {$y$};
    \node[above=2pt] at (a) {$a$};
    \node[above=2pt] at (b) {$b$};
    \node[above=2pt] at (x) {$x$};
    \node[left=1pt] at (1.1,0.5) {\tiny $\le W(s,t)$};
    \node[right=1pt] at (3.4,0.5) {\tiny $\le W(s,t)$};
    
    %first and last edges in P_{uv} should not yet be included-RCS
\end{tikzpicture}
\end{minipage}
    \caption{Illustration of Lemma~\ref{lem:improvements-2epsW} (left) and Lemma~\ref{lem:improvements-6W} (right) with $\ell = 2$. Note that $s = u_1$ and $t = v_2'$ in this example. By adding $\pi(s,t)$, at least one of the pairs' $(a,x)$ or $(x,b)$ distance improves by at least $\frac{\eps W(s,t)}{2}$. Note that $a$, $x$, $b$ are not necessarily distinct.}
    \label{fig:6W-spanners}
\end{figure}

%\vspace{-2em}

We will refer to the set $\{u_1, \ldots, u_{\ell}\}$ as the \emph{prefix} and the set $\{v_1', \ldots, v_{\ell}'\}$ as the \emph{suffix}. Consider the shortest paths $\pi(s,v_{\ell})$ and $\pi(u_1', t)$. By Lemma~\ref{lemma:d_initialize}, there are $\Omega(d \ell)$ $d$-light neighbors which are adjacent to a vertex in the prefix and suffix, respectively, connected by an edge of weight at most $W(s,t)$.

Consider the subpath $\pi(v_{\ell}, u_1')$; suppose $z \ge 1$ edges of $\pi(s,t)$ are added during spanner completion. By Lemma~\ref{lemma:d_initialize} again, there are $\Omega(dz)$ $d$-light neighbors which are adjacent to a vertex on $\pi(v_{\ell}, u_1')$. In the following lemmas, denote by $H_0$ and $H_1$ the spanner immediately before and after $\pi(s,t)$ is added, respectively.

\begin{lemma}\label{lem:improvements-2epsW}
Let $(s,t) \in \Pc$ be such that $\pi(s,t)$ is added to $H$ during $+(2+\eps)W(\cdot,\cdot)$ spanner completion. Let $a$ and $b$ be vertices in the prefix and suffix respectively. Let $x$ be a $d$-light neighbor of the path $\pi(v_{\ell}, u_1')$. Then both of the following hold:
\begin{enumerate}
\setlength{\itemsep}{5pt}
\item $d_{H_1}(a,x) \le d_G(a,x) + 2W(s,t)$ and $d_{H_1}(b,x) \le d_G(b,x) + 2W(s,t)$
\item $d_{H_0}(a,x) - d_{H_1}(a,x) > \frac{\eps W(s,t)}{2}$ or $d_{H_0}(b,x) - d_{H_1}(b,x) > \frac{\eps W(s,t)}{2}.$
\end{enumerate}
\end{lemma}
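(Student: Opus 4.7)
My plan is to prove the two parts essentially independently, with part 1 a direct triangle-inequality calculation and part 2 a proof-by-contradiction that mimics the argument in Lemma~\ref{lem:improvements} style reasoning but exploits the prefix/suffix edges.

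For part 1, since $\pi(s,t) \subseteq H_1$, the subpath from $a$ along $\pi(s,t)$ to the point $y \in \pi(v_\ell, u_1')$ to which $x$ is attached lies entirely in $H_1$, so $d_{H_1}(a,y) \le d_G(a,y)$. Tacking on the $d$-light edge $xy$ gives $d_{H_1}(a,x) \le d_G(a,y) + W(s,t)$, and one more triangle inequality $d_G(a,y) \le d_G(a,x) + w(xy) \le d_G(a,x) + W(s,t)$ yields the claim. The bound for $(b,x)$ is symmetric.

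For part 2, the crucial observation is what $H_0$ already contains even before $\pi(s,t)$ is added during spanner completion: because the distance-oblivious phase added the first $\ell$ and last $\ell$ missing edges of $\pi(s,t)$ to $H$, the entire subpaths from $s$ to $v_\ell$ and from $u_1'$ to $t$ along $\pi(s,t)$ lie in $H_0$. In particular $d_{H_0}(s,a) = d_G(s,a)$ and $d_{H_0}(b,t) = d_G(b,t)$, since $a \in \{u_1,\dots,u_\ell\}$ lies on the $s$-to-$v_\ell$ subpath and $b \in \{v_1',\dots,v_\ell'\}$ lies on the $u_1'$-to-$t$ subpath.

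Now suppose for contradiction that both improvements fail, i.e., $d_{H_0}(a,x) \le d_{H_1}(a,x) + \tfrac{\eps W(s,t)}{2}$ and $d_{H_0}(b,x) \le d_{H_1}(b,x) + \tfrac{\eps W(s,t)}{2}$ (note that if either $d_{H_0}$ term were infinite, the improvement would be infinite, so finiteness is automatic). Concatenating an $s$-to-$a$ path, an $a$-to-$x$ path, an $x$-to-$b$ path, and a $b$-to-$t$ path inside $H_0$, and plugging in the intermediate bound $d_{H_1}(a,x) \le d_G(a,y) + W(s,t)$ from part 1 (and analogously for $b$), I get
\[
d_{H_0}(s,t) \le d_G(s,a) + d_G(a,y) + d_G(y,b) + d_G(b,t) + 2W(s,t) + \eps W(s,t).
\]
Since $a, y, b$ appear on $\pi(s,t)$ in this order (prefix, middle subpath, suffix), the four $d_G$ terms telescope to $d_G(s,t)$, yielding $d_{H_0}(s,t) \le d_G(s,t) + (2+\eps)W(s,t)$. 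This contradicts the hypothesis that $(s,t)$ was unsatisfied at the moment $\pi(s,t)$ was added during $+(2+\eps)W(\cdot,\cdot)$ completion.

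The main obstacle is bookkeeping: making sure that $d_{H_0}(s,a)$ and $d_{H_0}(b,t)$ really are equal to the $G$-distances (which requires carefully identifying what the distance-oblivious phase guarantees) and that $a,y,b$ lie in the correct order along $\pi(s,t)$ so that the $d_G$ terms collapse to $d_G(s,t)$. Everything else is a triangle inequality chain.
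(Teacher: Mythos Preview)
Your proposal is correct and follows essentially the same approach as the paper's own proof: the same triangle-inequality chain for part 1 (via the intermediate bound $d_{H_1}(a,x)\le d_G(a,y)+W(s,t)$), and the same contradiction argument for part 2, hinging on the observation that the prefix/suffix subpaths of $\pi(s,t)$ already lie in $H_0$ so that $d_{H_0}(s,a)=d_G(s,a)$ and $d_{H_0}(b,t)=d_G(b,t)$.
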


\begin{proof}
\begin{enumerate}
    \item We have by the triangle inequality
    \begin{align}
        d_{H_1}(a,x) &\le d_G(a,y) + w(xy) \le d_G(a,y) + W(s,t) \label{ineq:triangle-2W}\\
        &\le [d_G(a,x) + w(xy)] + W(s,t) \tag*{}\\
        &\le d_G(a,x) + 2W(s,t) \tag*{}
    \end{align}
    where we have used the fact that $x$ is a $d$-light neighbor, so $w(xy) \le W(s,t)$ by Lemma~\ref{lemma:d_initialize}. Similarly, $d_{H_1}(b,x) \le d_G(b,x) + 2W(s,t)$.
    
    %\item This is the same as item 2 of Lemma~\ref{lem:improvements-6W}, except that $a$ and $b$ are on the prefix and suffix. 
    \item The vertices $a$ and $b$ are on the prefix and suffix respectively. Suppose otherwise $d_{H_0}(a,x) - d_{H_1}(a,x) \le \frac{\eps W(s,t)}{2}$ and $d_{H_0}(x,b) - d_{H_1}(x,b) \le \frac{\eps W(s,t)}{2}$. Consider a shortest $s$--$t$ path in $H_0$ if it exists (if no such path exists, then $d_{H_0}(s,x) = \infty$ or $d_{H_0}(t,x) = \infty$, and 2. follows). Since $a$ is on the prefix and the first $\ell$ edges of $\pi(s,t)$ are added before completion, we necessarily have $d_{H_0}(s,a) = d_G(s,a)$ and symmetrically $d_{H_0}(b,t) = d_G(b,t)$. Then
    
\begin{align*}
    d_{H_0}(s,t) &\le d_G(s,a) + d_{H_0}(a,x) + d_{H_0}(x,b) + d_G(b,t) \\
    &\le d_G(s,a) + \left[d_{H_1}(a,x) + \frac{\eps W(s,t)}{2}\right] \\
    &\quad + \left[d_{H_1}(x,b) + \frac{\eps W(s,t)}{2}\right] + d_G(b,t)\\
    &\underset{\eqref{ineq:triangle-2W}}{\le} d_G(s,a) + d_G(a,y) + d_G(y,b) + d_G(b,t) + (2+\eps)W(s,t) \\
    &= d_G(s,t) + (2+\eps)W(s,t)
\end{align*}
    contradicting that $\pi(s,t)$ was added during $+(2+\eps)W(\cdot,\cdot)$ spanner completion.
\end{enumerate}\qed
\end{proof}

For the pairwise $+(6+\eps)W(\cdot,\cdot)$ spanner, we consider arbitrary $d$-light neighbors $a$ and $b$ of the prefix and suffix, and similarly consider vertex pairs $(a,x)$, $(b,x)$ whose distances sufficiently improve:
\begin{lemma} \label{lem:improvements-6W}
Let $(s,t) \in \Pc$ be such that $\pi(s,t)$ is added to $H$ during $+(6+\eps)W(\cdot,\cdot)$ spanner completion. Let $a$ and $b$ be $d$-light neighbors adjacent to vertices $u_i$ and $v_j'$ in the prefix and suffix, respectively. Let $x$ be a $d$-light neighbor of the path $\pi(v_{\ell}, u_1')$. Then both of the following hold:

\begin{enumerate}
\setlength{\itemsep}{5pt}
\item $d_{H_1}(a,x) \le d_G(a,x) + 4W(s,t)$ and $d_{H_1}(x,b) \le d_G(x,b) + 4W(s,t)$
\item $d_{H_0}(a,x) - d_{H_1}(a,x) > \frac{\eps W(s,t)}{2}$ or $d_{H_0}(x,b) - d_{H_1}(x,b) > \frac{\eps W(s,t)}{2}.$
\end{enumerate}
\end{lemma}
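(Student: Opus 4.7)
The plan is to mirror the proof of Lemma~\ref{lem:improvements-2epsW}, where the increase from $+(2+\eps)W(s,t)$ to $+(6+\eps)W(s,t)$ error is exactly what is needed to absorb the two extra light hops at the endpoints: now $a$ and $b$ sit one edge off the path $\pi(s,t)$, rather than on it. Throughout, let $y$ denote the vertex of $\pi(v_\ell, u_1')$ to which $x$ is attached, so $w(xy) \le W(s,t)$ by Lemma~\ref{lemma:d_initialize}, and likewise $w(a u_i), w(b v_j') \le W(s,t)$.

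For part~1, after $\pi(s,t)$ has been added, the entire $u_i$-to-$y$ subpath of $\pi(s,t)$ lies inside $H_1$. Routing $a \to u_i \to y \to x$ therefore gives
\[
d_{H_1}(a,x) \le w(a u_i) + d_G(u_i, y) + w(yx) \le d_G(u_i, y) + 2W(s,t).
\]
Applying the triangle inequality in $G$ via $a$ and $x$ yields $d_G(u_i,y) \le d_G(a,x) + 2W(s,t)$, so $d_{H_1}(a,x) \le d_G(a,x) + 4W(s,t)$. The symmetric routing $b \to v_j' \to y \to x$ handles $(x,b)$.

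For part~2, I would argue by contradiction, assuming both $d_{H_0}(a,x) - d_{H_1}(a,x) \le \frac{\eps W(s,t)}{2}$ and $d_{H_0}(x,b) - d_{H_1}(x,b) \le \frac{\eps W(s,t)}{2}$. The key structural use of the prefix/suffix construction is that the distance-oblivious phase has already inserted the subpaths of $\pi(s,t)$ from $s$ to $u_i$ and from $v_j'$ to $t$, so $d_{H_0}(s,u_i) = d_G(s,u_i)$ and $d_{H_0}(v_j',t) = d_G(v_j',t)$. Concatenating the $H_0$-walk $s \to u_i \to a \to x \to b \to v_j' \to t$, applying the two improvement bounds, and then substituting the $H_1$-routings of $d_{H_1}(a,x)$ and $d_{H_1}(x,b)$ through $y$ from part~1 gives
\[
d_{H_0}(s,t) \le d_G(s,u_i) + d_G(u_i, y) + d_G(y, v_j') + d_G(v_j', t) + 6W(s,t) + \eps W(s,t).
\]
Since $y$ lies on $\pi(s,t)$ between $u_i$ and $v_j'$, the first four terms telescope to $d_G(s,t)$, producing $d_{H_0}(s,t) \le d_G(s,t) + (6+\eps)W(s,t)$ and contradicting that $\pi(s,t)$ was added during $+(6+\eps)W(\cdot,\cdot)$ spanner completion.

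No idea beyond those in Lemma~\ref{lem:improvements-2epsW} is required; the main subtlety is bookkeeping the six copies of $W(s,t)$ (each of the edges $a u_i$, $xy$, $b v_j'$ contributing twice --- once in the outer walk and once in the inner routing through $y$) so that they sum to exactly $6W(s,t)$, with the additive $\eps W(s,t)$ slack coming from the two improvement assumptions. The prefix/suffix insertion during the distance-oblivious phase is what makes the endpoint equalities $d_{H_0}(s,u_i) = d_G(s,u_i)$ and $d_{H_0}(v_j',t) = d_G(v_j',t)$ available, and without them an analogous contradiction would not close.
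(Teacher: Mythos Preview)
Your proposal is correct and follows essentially the same approach as the paper's own proof: both route $a \to u_i \to y \to x$ in $H_1$ for part~1, and for part~2 both assume the contrary, concatenate the $H_0$-walk $s \to u_i \to a \to x \to b \to v_j' \to t$, substitute the intermediate bound $d_{H_1}(a,x) \le d_G(u_i,y) + 2W(s,t)$ (and its symmetric version), and telescope along $\pi(s,t)$ to reach the contradiction $d_{H_0}(s,t) \le d_G(s,t) + (6+\eps)W(s,t)$. Your explicit justification that $d_{H_0}(s,u_i) = d_G(s,u_i)$ from the prefix insertion is in fact slightly more careful than the paper, which uses this equality implicitly in the first line of its display.
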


%The proofs are similar to each other and~\cite{elkin2020improved}, so we include these in Appendix~\ref{apdx:proofs}.

\begin{proof}
\begin{enumerate}
\item Similar to Lemma~\ref{lem:improvements-2epsW}, we have by the triangle inequality
\begin{align}
d_{H_1}(a,x) &\le w(au_i) + d_G(u_i, y) + w(xy) \tag*{} \\
&\le d_G(u_i, y) + 2W(s,t) \label{ineq:triangle-6W}\\
&\le [w(u_ia) + d_G(a,x) + w(xy)] + 2W(s,t) \tag*{} \\
&\le d_G(a,x) + 4W(s,t) \tag*{}
\end{align}
where we have used the fact that $a$ and $x$ are $d$-light neighbors, so $w(u_ia) \le W(s,t)$ and $w(xy) \le W(s,t)$. Similarly, $d_{H_1}(x,b) \le d_G(x,b) + 4W(s,t)$.
\item Suppose otherwise $d_{H_0}(a,x) - d_{H_1}(a,x) \le \frac{\eps W(s,t)}{2}$ and $d_{H_0}(x,b) - d_{H_1}(x,b) \le \frac{\eps W(s,t)}{2}$. Consider a shortest $s$--$t$ path in $H_0$ if it exists (if no such path exists, then $d_{H_0}(s,x) = \infty$ or $d_{H_0}(t,x) = \infty$ and 2. follows). Then
\begin{align*}
d_{H_0}(s,t) &\le d_G(s,u_i) + w(u_ia) + d_{H_0}(a,x) + d_{H_0}(x,b) + w(bv_i') + d_G(v_i't) \\
&\le d_G(s,u_i) + W(s,t) + \left[d_{H_1}(a,x) + \frac{\eps W(s,t)}{2}\right] \\
&\quad +\left[d_{H_1}(x,b) + \frac{\eps W(s,t)}{2}\right] + W(s,t) + d_G(v_i't) \\
&\le d_G(s,u_i) + d_{H_1}(a,x) + d_{H_1}(x,b) + d_G(v_i',t) + (2+\eps)W(s,t) \\
&\underset{\eqref{ineq:triangle-6W}}{\le} d_G(s,u_i) + [d_G(u_i,y) + 2W(s,t)] \\
&\quad + [d_G(y,v_i') + 2W(s,t)] + d_G(v_i',t) + (2+\eps)W(s,t) \\
&= d_G(s,t) + (6+\eps)W(s,t)
\end{align*}
contradicting that $\pi(s,t)$ was added during $+(6+\eps)W(\cdot,\cdot)$ spanner completion.
\end{enumerate}\qed
\end{proof}

\begin{lemma} \label{lem:pairwise-size-2epsW}
By setting $d = |\Pc|^{1/3}$ and $\ell = n/|\Pc|^{2/3}$, the pairwise $+(2+\eps)W(\cdot,\cdot)$ construction outputs a subgraph $H$ with $|E(H)| = O_{\eps}\left(n|\Pc|^{1/3}\right)$.
\end{lemma}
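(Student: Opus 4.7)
The plan is to bound $|E(H)|$ by totaling the three sources of edges: the $d$-light initialization contributes $O(nd)$ edges; the distance-oblivious phase, which adds at most $2\ell$ edges per pair in $\Pc$, contributes $O(\ell|\Pc|)$ edges; and the rest come from spanner completion. With $d = |\Pc|^{1/3}$ and $\ell = n/|\Pc|^{2/3}$, the first two terms are already $O(n|\Pc|^{1/3})$, so all the real work is in bounding the number of completion edges.

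To bound the completion phase I will mimic the set-off/improvement scheme used in the commented-out subsetwise argument (and in Elkin et al.). Call an ordered pair $(p, q) \in V \times V$ \emph{set-off} at weight $W_0$ once some inserted path during completion pushes $d_H(p, q)$ down to at most $d_G(p, q) + 2W_0$, and call a later insertion at weight $W \ge W_0$ an \emph{improvement} of $(p, q)$ if it strictly decreases $d_H(p, q)$ by at least $\frac{\eps W}{2}$. Because pairs in $\Pc$ are processed in nondecreasing order of $W(\cdot, \cdot)$ and $d_H(p, q)$ can shrink only from $d_G(p, q) + 2W_0$ down to $d_G(p, q)$, each pair admits at most $\frac{2W_0}{\eps W_0 / 2} = O(1/\eps)$ improvements over the entire phase. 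Summed across all $O(n^2)$ pairs, this gives a global budget of $O(n^2/\eps)$ improvement events.

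Next I cash each completion step against this budget using Lemma~\ref{lem:improvements-2epsW}. Suppose $(s, t) \in \Pc$ triggers the addition of $z \ge 1$ new edges during completion; then the middle subpath $\pi(v_\ell, u_1')$ has $z$ missing edges, so Lemma~\ref{lemma:d_initialize} supplies a set $X$ of $\Omega(dz)$ $d$-light neighbors of this middle. The prefix $\{u_1, \dots, u_\ell\}$ and suffix $\{v_1', \dots, v_\ell'\}$, each of size $\ell$, are already fully present in $H_0$ by construction of the distance-oblivious phase. Lemma~\ref{lem:improvements-2epsW} says that for every triple $(a, b, x) \in \text{prefix} \times \text{suffix} \times X$, at least one of $(a, x)$ or $(b, x)$ is improved. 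A simple pigeonhole (for each fixed $x \in X$, either some $(a, x)$ is not improved, in which case all $\ell$ pairs $(b, x)$ must be improved, or else all $\ell$ pairs $(a, x)$ are improved) yields $\Omega(\ell)$ distinct improvement events per $x$, hence $\Omega(\ell d z)$ improvement events per insertion.

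Summing over all inserted pairs, if $k$ denotes the total number of completion edges then $\sum \ell d z = \Omega(\ell d k)$ improvement events are generated, so $\ell d k \le O(n^2/\eps)$. Substituting $\ell d = n/|\Pc|^{1/3}$ gives $k = O_{\eps}(n|\Pc|^{1/3})$, matching the first two phases and yielding the claimed total. The step I expect to require the most care is the pigeonhole conversion of triples to distinct improved pairs, which is precisely what forces us to preinstall the prefix and suffix during the distance-oblivious phase: without $d_{H_0}(s, a) = d_G(s, a)$ and $d_{H_0}(b, t) = d_G(b, t)$, Lemma~\ref{lem:improvements-2epsW}(2) could not be invoked and the counting would collapse.
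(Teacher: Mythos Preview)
Your proof is correct and follows essentially the same approach as the paper: bound the first two phases directly as $O(nd+\ell|\Pc|)$, then use the set-off/improvement potential argument together with Lemma~\ref{lem:improvements-2epsW} to show that adding $z$ completion edges yields $\Omega(d\ell z)$ improvements against a global budget of $O(n^2/\eps)$. Your explicit pigeonhole (for each fixed $x$, either all $\ell$ prefix pairs or all $\ell$ suffix pairs are improved) is a cleaner justification of the step the paper phrases as ``there are $\Omega(\ell)$ choices for $a$ or $b$,'' but the underlying argument is identical.
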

\begin{proof}
In the distance-oblivious phase, we add $O(nd + \ell|\Pc|) = O(n|\Pc|^{1/3})$ edges to $H$.
%We use a similar set-off and improving argument as in~\cite{elkin2020improved}. 
A vertex pair $(v,x)$ is \emph{set-off} if it is the first time that $d_H(v,x) \le d_G(v,x) + (2+\eps)W(s,t)$ and is \emph{improved} if its distance in $H$ decreases by at least $\frac{\eps W(s,t)}{2}$. Suppose adding $\pi(s,t)$ during $+(2+\eps)W(\cdot,\cdot)$ spanner completion adds $z \ge 1$ additional edges. Let $x$ be a $d$-light neighbor of $\pi(v_{\ell},u_1')$ and let $a$, $b$ be vertices in the prefix and suffix. By Lemma~\ref{lemma:d_initialize}, there are $\Omega(dz)$ vertices $x$ adjacent to $\pi(v_{\ell}u_1')$. By Lemma~\ref{lem:improvements-2epsW}, both of the pairs $(a,x)$, $(b,x)$ are set-off if not already, and at least one of the pairs is improved upon adding $\pi(s,t)$. Since there are $\Omega(\ell)$ choices for $a$ or $b$, this gives $\Omega(dz \times \ell) = \Omega(nz/|\Pc|^{1/3})$ improvements upon adding $z$ edges to $H$. %$\Omega(dz \times d\ell) = \Omega(d^2 z \ell)$ improvements upon adding $z$ edges to $H$.

Once a pair $(v,x)$ is set-off, it can only be improved $O(\frac{1}{\eps})$ times; this follows since pairs are ordered by their maximum weight, so any improvement is by at least $\frac{\eps W(s,t)}{2}$. If $Z$ total edges are added during spanner completion, then the number of improvements is $\Omega(d\ell Z)$. There are $O(n^2)$ vertex pairs and once set-off, each vertex pair is improved $O\left(\frac{1}{\eps}\right)$ times, in which we have $ \Omega(d\ell Z) = O\left(\frac{n^2}{\eps}\right). $
Since $d = |\Pc|^{1/3}$ and $\ell = n/|\Pc|^{2/3}$, we obtain $Z = O\left(\frac{1}{\eps} n|\Pc|^{1/3}\right)$. Altogether we obtain $|E(H)| = O\left(\frac{1}{\eps}n|\Pc|^{1/3}\right)$.
\qed
\end{proof}

\begin{lemma}\label{lem:pairwise-size-6epsW}
By setting $d = |\Pc|^{1/4}$ and $\ell = n/|\Pc|^{3/4}$, the pairwise $+(6+\eps)W(\cdot,\cdot)$ construction outputs a subgraph $H$ with $|E(H)| = O_{\eps}(n|\Pc|^{1/4})$.
\end{lemma}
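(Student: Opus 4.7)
The plan is to mirror the analysis of Lemma~\ref{lem:pairwise-size-2epsW} closely, with the only substantive change being that the choices of $a$ and $b$ now range over $d$-light neighbors of the prefix and suffix rather than over the prefix and suffix themselves, which multiplies the improvement count by an extra factor of $d$ on each side. First, I would bound the edges added in the distance-oblivious phase: the $d$-light initialization contributes $O(nd)$ edges, and prepending/appending at most $\ell$ edges to each path in $\Pc$ contributes $O(\ell|\Pc|)$ edges, giving $O(nd + \ell|\Pc|) = O(n|\Pc|^{1/4})$ edges after plugging in $d=|\Pc|^{1/4}$ and $\ell=n/|\Pc|^{3/4}$.

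Next, I would reuse the set-off/improved terminology from Lemma~\ref{lem:pairwise-size-2epsW}: a pair $(v,x)$ is \emph{set-off} the first time $d_H(v,x) \le d_G(v,x)+4W(s,t)$, and is \emph{improved} if its distance in $H$ drops by at least $\tfrac{\eps W(s,t)}{2}$. Because pairs are processed in nondecreasing order of $W(\cdot,\cdot)$, each pair can be improved $O(1/\eps)$ times after being set-off. The total number of improvements over the whole run is thus $O(n^2/\eps)$.

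The core step is the per-path counting. Suppose $\pi(s,t)$ is added during $+(6+\eps)W(\cdot,\cdot)$ spanner completion and contributes $z \ge 1$ new edges. By Lemma~\ref{lemma:d_initialize} applied to $\pi(v_\ell,u_1')$, there are $\Omega(dz)$ $d$-light neighbors $x$ of the middle portion; applied to $\pi(s,v_\ell)$ and $\pi(u_1',t)$ it yields $\Omega(d\ell)$ $d$-light neighbors $a$ of the prefix and $\Omega(d\ell)$ $d$-light neighbors $b$ of the suffix. Lemma~\ref{lem:improvements-6W} guarantees that for every triple $(a,x,b)$, one of the pairs $(a,x)$ or $(x,b)$ is both set-off and improved. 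Fix $x$ and let $A$ be the set of prefix-neighbors $a$ for which $(a,x)$ is improved and $B$ the set of suffix-neighbors $b$ for which $(x,b)$ is improved; the guarantee forces $A$ to contain every prefix-neighbor or $B$ to contain every suffix-neighbor, yielding $\Omega(d\ell)$ distinct pairs improved for this $x$. Summing over the $\Omega(dz)$ choices of $x$ gives $\Omega(d^2 \ell z)$ improvements per path.

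Finally, letting $Z$ be the total number of edges added in spanner completion, summing the above over all processed pairs gives $\Omega(d^2 \ell Z)$ improvements in total, so $d^2 \ell Z = O(n^2/\eps)$. Substituting $d = |\Pc|^{1/4}$ and $\ell = n/|\Pc|^{3/4}$ yields $d^2\ell = n/|\Pc|^{1/4}$, hence $Z = O_\eps(n|\Pc|^{1/4})$. Combining with the distance-oblivious contribution gives $|E(H)| = O_\eps(n|\Pc|^{1/4})$, as claimed. The only step I expect to require care is the combinatorial argument for counting improvements per $x$; the $\Omega(d\ell)$ lower bound relies crucially on the ``all-or-nothing'' structure forced by Lemma~\ref{lem:improvements-6W} so that we avoid undercounting when the same pair $(x,b)$ is improved for many different choices of $a$.
\qed
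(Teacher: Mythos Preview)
Your proposal is correct and follows essentially the same route as the paper's proof, which simply says the argument of Lemma~\ref{lem:pairwise-size-2epsW} goes through with $a,b$ now ranging over $d$-light neighbors of the prefix/suffix, yielding $\Omega(dz \cdot d\ell)=\Omega(d^2\ell z)$ improvements per added path and hence $Z=O(\tfrac{1}{\eps}n|\Pc|^{1/4})$. Your ``all-or-nothing'' argument for counting improvements per fixed $x$ is actually more careful than what the paper writes (the paper just asserts the $\Omega(d^2\ell z)$ count), and it is sound; note also that since each improved pair contains $x$, a pair can appear in $I_x$ for at most two values of $x$, so the union over $x$ still has size $\Omega(d^2\ell z)$.
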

%The proof is nearly identical to that of Lemma~\ref{lem:pairwise-size-2epsW} except that $a$ and $b$ are $d$-light neighbors instead of prefix and suffix vertices; see Appendix~\ref{apdx:proofs}.
\begin{proof}
This is the same as Lemma~\ref{lem:pairwise-size-2epsW} except that $a$ and $b$ are $d$-light neighbors of the subpaths containing the first and last $\ell$ missing edges. Then if $z \ge 1$ edges in $\pi(s,t)$ are added during $+(6+\eps)W(\cdot,\cdot)$ spanner completion, we obtain $\Omega(dz \times d\ell) = \Omega(d^2 \ell z) = \Omega(nz/|\Pc|^{1/4})$ improvements upon adding $z$ edges to $H$.

If $Z$ total edges are added during $+(6+\eps)W(\cdot,\cdot)$ spanner completion, then similar reasoning gives
\[ \Omega(d^2 \ell Z) = O\left(\frac{n^2}{\eps}\right) \]
in which $Z = O(\frac{1}{\eps}n|\Pc|^{1/4})$ and $|E(H)| = O(\frac{1}{\eps}n|\Pc|^{1/4})$. \qed
\end{proof}

\begin{proof}[Lemma~\ref{lemma:2w}]
Since $x$ is a $d$-light neighbor of $\pi(s,t)$, we have $w(xy) \le W(s,t)$. Using the triangle inequality:
\begin{align*}
    d_G(s,x) + d_G(x,t) &\le [d_G(s,y) + w(xy)] + [w(xy) + d_G(y,t)] \\
    &\le d_G(s,t) + 2w(xy) \\
    &\le d_G(s,t) + 2W(s,t)
\end{align*}
completing the proof. \qed
\end{proof}

Lemmas~\ref{lem:pairwise-size-2epsW} and~\ref{lem:pairwise-size-6epsW} imply Theorem~\ref{thm:sparse}.\ref{thm:pairwise-2epsW} and~\ref{thm:sparse}.\ref{thm:pairwise-6epsW} respectively. Further, by setting $W=1$ and $\eps = 0.5$, these results imply pairwise $+2$ and $+6$ spanners of size $O(n|\Pc|^{1/3})$ and $O(n|\Pc|^{1/4})$ in \emph{unweighted} graphs (as a $+2.5$ spanner of an unweighted graph is also a $+2$ spanner). These edge bounds match those of existing pairwise $+2$ spanners~\cite{Kavitha15,censor2016distributed} and pairwise $+6$ spanners~\cite{Kavitha2017}.

\subsection{Pairwise $+2W(\cdot,\cdot)$ and $+4W(\cdot,\cdot)$ spanners}
We prove Theorem~\ref{thm:sparse}.\ref{thm:pairwise-2W} and~\ref{thm:sparse}.\ref{thm:pairwise-4W}: every weighted graph has a pairwise $+2W(\cdot,\cdot)$ spanner  and $+4W(\cdot,\cdot)$ spanner on $O(n|\Pc|^{1/3})$ edges and $O(n|\Pc|^{2/7})$ edges respectively. This removes the $\eps$ dependence from Theorem~\ref{thm:sparse}.\ref{thm:pairwise-2epsW} and uses local error instead of global $W$ error as in~\cite{ahmed2020weighted}. First, we need the following simple lemma:
\begin{lemma}\label{lemma:2w}
Let $H$ be a $d$-light initialization, and let $s, t \in V$. Let $x$ be a $d$-light neighbor of $\pi(s,t)$ connected to a vertex $y \in \pi(s,t)$ in $H$. Consider a shortest path tree in $G$ rooted at $x$. Then the distance from $s$ to $t$ in this tree is at most $d_G(s,t) + 2W(s,t)$.
\end{lemma}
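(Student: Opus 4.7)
The plan is to reduce the statement to a one-line triangle inequality argument. The key observation is that in a shortest path tree rooted at $x$, the (unique) tree distance from $s$ to $t$ is simply $d_G(s,x) + d_G(x,t)$, since every path in the tree between two non-root vertices passes through their common ancestor and in particular the $s$--$t$ tree path passes through the root $x$. So it suffices to show the inequality $d_G(s,x) + d_G(x,t) \le d_G(s,t) + 2W(s,t)$.

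For this, I would first invoke Lemma~\ref{lemma:d_initialize} (or rather, the property of $d$-light neighbors discussed immediately after it): since $x$ is a $d$-light neighbor of $\pi(s,t)$ attached to $y\in\pi(s,t)$, the connecting edge satisfies $w(xy)\le W(s,t)$. Next, I would apply the triangle inequality twice in $G$, namely $d_G(s,x)\le d_G(s,y)+w(xy)$ and $d_G(x,t)\le w(xy)+d_G(y,t)$. Adding these two bounds and using $d_G(s,y)+d_G(y,t)=d_G(s,t)$ (because $y$ lies on the shortest $s$--$t$ path $\pi(s,t)$) yields $d_G(s,x)+d_G(x,t)\le d_G(s,t)+2w(xy)\le d_G(s,t)+2W(s,t)$, as desired.

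There is essentially no obstacle here; the only subtle point worth emphasizing is that $y$ being on $\pi(s,t)$ is what lets us split $d_G(s,t)$ additively as $d_G(s,y)+d_G(y,t)$, and that the bound $w(xy)\le W(s,t)$ is exactly the additional property extracted from the construction of $d$-light initialization (as highlighted in the paragraph following Lemma~\ref{lemma:d_initialize}). Once these two facts are in hand, the proof is immediate from two applications of the triangle inequality.
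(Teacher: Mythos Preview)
Your argument is essentially identical to the paper's: bound $d_G(s,x)+d_G(x,t)$ via two triangle inequalities using $w(xy)\le W(s,t)$ and $d_G(s,y)+d_G(y,t)=d_G(s,t)$. One small slip: the $s$--$t$ path in a tree rooted at $x$ passes through the \emph{least common ancestor} of $s$ and $t$, which need not be $x$, so the tree distance is \emph{at most} $d_G(s,x)+d_G(x,t)$ rather than equal to it; this inequality is all you need, and the rest of your proof goes through unchanged.
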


%\begin{theorem} \label{thm:subspan}
%Any $n$-node graph $G=(V,E,w)$ with source nodes $S \subseteq V$ has a $+4W(s, t)$ subset spanner with $O(n\sigma^{1/2})$ edges.
%\end{theorem}
%\begin{proof}
%In the original construction, whenever a pair $s,t$ was near-connected, there was a vertex $u$ in $\pi(s,t)$ such that $v$ is adjacent to $u$ in $H$ and $\pi_H(v,t)=\pi(v,t)$. But it may happen that there is no missing edge adjacent to $u$ in $\pi(s,t)$ and $w(u,v)$ is a heavy edge. In this case, we add all the edges of $\pi(s,t)$ in $H$. Note that after adding all the edges, every adjacent node becomes a near-connected pair with respect to $s$ and $t$. There are at most $\sigma n$ possible near connected pairs. And now if we have shortest path $\pi(s,t)$ that has a missing edge adjacent to $u$, then $(u, t)$ is a near-connected pair. {\color{red} Let $v$ be a node adjacent to $u$ such that $\pi_H(v,t)=\pi_G(v,t)$ and $\pi_H(v,s)=\pi_G(v,s)$. If $(u,v)$ is an edge added in the $d$-light initialization then then the claim is true. However, if this is not the case, then the weight of $(u,v)$ might be larger than the largest weight of the edges in $\pi(s,t)$.}
%\qed
%\end{proof}

%This is proven by the triangle inequality and the fact that $w(xy) \le W(s,t)$; see Appendix~\ref{apdx:proofs}. 
\begin{proof}
Since $x$ is a $d$-light neighbor of $\pi(s,t)$, we have $w(xy) \le W(s,t)$. Using the triangle inequality:
\begin{align*}
    d_G(s,x) + d_G(x,t) &\le [d_G(s,y) + w(xy)] + [w(xy) + d_G(y,t)] \\
    &\le d_G(s,t) + 2w(xy) \\
    &\le d_G(s,t) + 2W(s,t)
\end{align*}
completing the proof. \qed
\end{proof}

%The remainder of Theorem~\ref{thm:sparse}.\ref{thm:pairwise-2W} is similar to the pairwise $+2W$ spanner in~\cite{ahmed2020weighted} except we use the fact that $d$-light neighbors are connected by an edge of weight $\le W(s,t)$. We include the proof of Theorem~\ref{thm:sparse}.\ref{thm:pairwise-2W} in Appendix~\ref{apdx:proofs} and the pairwise $+4W(\cdot,\cdot)$ spanner in Appendix~\ref{apdx:4W}.

\begin{proof}[Theorem~\ref{thm:sparse}.\ref{thm:pairwise-2W}]
The main idea is to first show the existence of a pairwise $+2W(\cdot,\cdot)$ \emph{spanner with slack} on $O(n|\Pc|^{1/3})$ which satisfies at least a constant fraction of the vertex pairs. Then, we apply a lemma from~\cite{bodwin2020note} (Lemma~\ref{lem:slack}) to show the existence of a pairwise spanner. Let $\ell, d \ge 1$ be parameters, and let $H$ be a $d$-light initialization ($O(nd)$ edges). We iterate over each pair $(s,t) \in \Pc$. If $d_H(s,t) \le d_G(s,t) + 2W(s,t)$, then $(s,t)$ is already satisfied and we can do nothing. If $d_H(s,t) > d_G(s,t) + 2W(s,t)$ and $\pi(s,t)$ has fewer than $\ell$ missing edges in $H$, add all missing edges in $\pi(s,t)$ to $H$ ($O(\ell |\Pc|)$ edges). Otherwise $(s,t)$ is unsatisfied and $\pi(s,t)$ is missing more than $\ell$ edges. In this case, we do not add any edges yet.

After iterating over each vertex pair, consider the following randomized process: for each vertex in $V$, select it with probability $\frac{6}{d\ell}$. If more than $\frac{6n}{d\ell}$ vertices are selected, repeat this process. For each selected vertex, add a shortest path tree rooted at that vertex to $H$. Note that at most $\frac{6n}{d\ell}$ vertices are selected, so at most $\frac{6n(n-1)}{d\ell} = O\left(\frac{n^2}{d\ell}\right)$ edges are added in this process.

By Lemma~\ref{lemma:d_initialize}, every unsatisfied pair $(s,t)$ has at least $\frac{d\ell}{6}$ $d$-light neighbors. Then with probability at least $1 - \left(1 - \frac{6}{d\ell}\right)^{d\ell / 6}$, we will have selected at least one $d$-light neighbor of $\pi(s,t)$. This probability is at least a constant, since $1 - \frac{1}{e} \le 1 - \left(1 - \frac{1}{x}\right)^x \le 1$ for all $x \ge 1$. By Lemma~\ref{lemma:2w}, adding a shortest path tree at any $d$-light neighbor of $\pi(s,t)$ satisfies the pair $(s,t)$.

Under this randomized process, every unsatisfied vertex pair is satisfied with at least constant probability $\alpha$, so by linearity of expectation, the expected number of satisfied vertex pairs is at least $\alpha |\Pc|$. Then there necessarily exists a subset of $\frac{6n}{d\ell}$ vertices such that adding a shortest path tree rooted at each vertex satisfies $\alpha |\Pc|$ vertex pairs.

The number of edges in this spanner with slack is at most $k\left(nd + \ell |\Pc| + \frac{n^2}{\ell d}\right)$ for some constant $k$; set $\ell := \frac{n}{|\Pc|^{2/3}}$ and $d := |\Pc|^{1/3}$ to obtain the desired edge bound of $kn|\Pc|^{1/3} = O(n|\Pc|^{1/3})$. This only implies the existence of a $+2W(\cdot,\cdot)$ spanner which satisfies a constant fraction $\alpha$ of vertex pairs; to finish, we use the following lemma: 
\begin{lemma}[\cite{bodwin2020note}] \label{lem:slack}
Let $a, b, c, k, \alpha > 0$ be constants (independent of $n$ or $|\Pc|$) with $\alpha \le 1$, let $G$ be an $n$-vertex graph, and let $p^*$ be a parameter. Suppose that:
\begin{itemize}
    \item There is an algorithm which, on input $G$, $\Pc$ with $|\Pc| \le p^*$, returns a pairwise spanner on $O(n^a)$ edges
    \item There is an algorithm which, on input $G$, $\Pc$ with $|\Pc| \ge p^*$, returns a spanner with slack on $\le k n^b|\Pc|^c$ edges which satisfies at least $\alpha |\Pc|$ vertex pairs
\end{itemize}
Then there is a pairwise spanner of $G$, $\Pc$ on $O(n^a + n^b |\Pc|^c)$ edges.
\end{lemma}
\begin{proof}[Lemma~\ref{lem:slack}]
While the number of unsatisfied vertex pairs $|\Pc|$ is greater than $p^*$, compute a spanner with slack on $\le k n^b |\Pc|^c$ edges, then remove the satisfied vertex pairs from $\Pc$. Repeat this process until $|\Pc| \le p^*$, then compute a pairwise spanner on $O(n^a)$ edges. The union of these computed spanners is a pairwise spanner of $G$ over the initial set of vertex pairs $\Pc$.

Let $p_1 = |\Pc|$ denote the number of vertex pairs initially, and for $i \ge 1$, let $p_i$ denote the number of unsatisfied vertex pairs immediately before the $i^{\text{th}}$ iteration. Since each spanner with slack satisfies at least a constant fraction $\alpha$ of the unsatisfied vertex pairs, we have $p_i \le (1-\alpha)p_{i-1}$, which implies $p_i \le p_1(1-\alpha)^{i-1}$ for all $i \ge 1$. Let $H_i = (V_i, E_i)$ denote the spanner with slack computed on the $i^{\text{th}}$ iteration, so that $|E_i| \le kn^b p_i^c$. By a simple union bound, we have

\begin{align*} 
\left|\bigcup_{i \ge 1} E_i\right| &\leq  \sum_{i \ge 1} |E_i| \le  \sum_{i \ge 1} kn^b p_i^c \\
&\le kn^b \sum_{i \ge 1}  (p_1(1-\alpha)^{i-1})^c \\
&= kn^b {p_1}^c \sum_{i \ge 1} ((1-\alpha)^c)^{i-1} \\
&= O(n^b |\Pc|^c)
\end{align*}
since $\sum_{i \ge 1} ((1-\alpha)^c)^{i-1}$ is a geometric series with common ratio $(1-\alpha)^c < 1$. The final pairwise spanner on $\le p^*$ pairs adds $O(n^a)$ edges, completing the proof.
\qed
\end{proof}

Lemma~\ref{lem:slack} implies the existence of a pairwise $+2W(\cdot,\cdot)$ spanner on $O(n|\Pc|^{1/3})$ edges.

\qed
\end{proof}

We now show that every weighted graph $G$ has a pairwise $+4W(\cdot,\cdot)$ spanner on $O(n|\Pc|^{2/7})$ edges.
\begin{lemma}\label{lemma:4w}
Let $H$ be a $d$-light initialization and let $s, t \in V$. Let $v_1$ and $v_2$ be two different $d$-light neighbors of $\pi(s,t)$, connected to two (not necessarily distinct) vertices $u_1, u_2 \in \pi(s,t)$ respectively. Consider adding the missing edges along the shortest paths from $s$--$u_1$, $v_1$--$v_2$, and $u_2$--$t$ to $H$. Then $d_H(s,t) \le d_G(s,t) + 4W(s,t)$.
%Let $H$ be a $d$-light initialization, and let $s, t \in V$. Suppose $H$ has two missing edges $ux$ and $vy$ along the shortest $s$-$t$ path, and let $r$ and $r'$ be adjacent to $u$ and $v$ in $H$ respectively. Consider adding the missing edges of a shortest path from $r$ to $r'$, $s$ to $u$, and $v$ to $t$. Then the paths from $s$ to $u$, $r$ to $r'$, and $v$ to $t$ provides a path from $s$ to $t$ that has length at most $d_G(s,t) + 4W(s,t)$.
\end{lemma}
\begin{proof}
See Fig.~\ref{fig:4W-spanners} for illustration. Since $v_1$ and $v_2$ are $d$-light neighbors of $u_1$, $u_2$ respectively, we have $w(u_1v_1) \le W(s,t)$ and $w(u_2v_2) \le W(s,t)$. By triangle inequality, we have
\begin{align*}
    d_H(s,t) &\le d_H(s,u_1) + w(u_1v_1) + d_H(v_1,v_2) + w(u_2v_2) + d_H(u_2,t) \\
    &\le d_G(s,u_1) + W(s,t) + d_G(v_1,v_2) + W(s,t) + d_G(u_2,t) \\
    &= d_G(s,u_1) + d_G(v_1, v_2) + d_G(u_2, t) + 2W(s,t)
    \intertext{Again by triangle inequality, $d_G(v_1,v_2) \le w(u_1v_1) + d_G(u_1,u_2)+w(u_2v_2) \le d_G(u_1,u_2) + 2W(s,t)$; hence we have}
    &\le d_G(s,u_1) + d_G(u_1,u_2) + d_G(u_2,t) + 4W(s,t) \\
    &= d_G(s,t) + 4W(s,t)
\end{align*}
as desired.
%\begin{figure}[h]
%\centering
%  \includegraphics[width=.50\textwidth]{figur%es/4W-spanners.png}
%\caption{Illustration of the idea behind the construction of $+4W(\cdot, \cdot)$-spanners.}
%\label{fig:4w}
%\end{figure} 
\qed
\end{proof}

\begin{proof}[Theorem~\ref{thm:sparse}.\ref{thm:pairwise-4W}]
As in Theorem~\ref{thm:sparse}.\ref{thm:pairwise-2W}, we use two parameters $d, \ell \ge 1$. Let $H$ be a $d$-light initialization. We iterate over each pair $(s,t) \in \Pc$. If $(s,t)$ is already satisfied, do nothing; otherwise, if $\pi(s,t)$ is missing at most $\ell$ edges in $H$, add the shortest path $\pi(s,t)$ to $H$. This adds at most $nd + \ell p$ edges to $H$.

Construct a random sample $R_1$ of vertices by selecting each vertex with probability $\frac{6d}{n}$; repeat if more than $6d$ vertices were selected. Add a shortest path tree rooted at each $v \in R_1$. We claim that if $(s,t)$ is unsatisfied and there are at least $\frac{n}{d^2}$ missing edges on $\pi(s,t)$, then with at least a constant probability, we will have selected a $d$-light neighbor of $\pi(s,t)$. Since $\pi(s,t)$ has at least $\frac{n}{d^2}$ missing edges, Lemma~\ref{lemma:d_initialize} implies there are at least $\frac{n}{6d}$ $d$-light neighbors of $\pi(s,t)$. Using a similar reasoning to Theorem~\ref{thm:sparse}.\ref{thm:pairwise-2W}, with at least a constant probability, we will have selected at least one $d$-light neighbor for $R_1$. By Lemma~\ref{lemma:2w}, adding the shortest path tree rooted at this neighbor satisfies the pair $(s,t)$. This sample $R_1$ adds $O(dn)$ edges to $H$.

Lastly, there may be vertex pairs $(s,t)$ such that $(s,t)$ is unsatisfied, and $\pi(s,t)$ is missing greater $\ell$ but fewer than $\frac{n}{d^2}$ edges. For each such pair, add the first $\ell$ missing edges and the last $\ell$ missing edges to $H$ ($O(\ell p))$ edges). Let the \emph{prefix} denote the set of vertices on $\pi(s,t)$ which are incident to one of the first $\ell$ missing edges which were added to $H$; define the \emph{suffix} similarly.

\begin{figure}[h]
\centering
\begin{tikzpicture}
    \node (s) at (0,0) {};
    \node (t) at (5,0) {};
    \node (v1) at (0.5,0) {};
    \node (v2) at (1,0) {};
    \node (v3) at (1.5,0) {};
    \node (v4) at (3,0) {};
    \node (v5) at (3.5,0) {};
    \node (v6) at (4,0) {};
    \node (v7) at (4.5,0) {};
    
    \node (r1) at (1.2,1) {};
    \node (r2) at (3.2,1) {};
    
    \foreach \x in {s,t}{
        \draw [fill=black] (\x) circle [radius=0.1];
    }
    \foreach \x in {s,t}{
        \node[below=2pt] at (\x) {$\x$};
    }
    
    \foreach \x in {v1,v2,v3,v4,v5,v6,v7,r1,r2}
    	\draw[fill=white] (\x) circle [radius=0.1];
    
    \draw[line width=1.8pt] (s)--(v1);
    \draw[line width=1.8pt] (v2)--(v3);
    \draw[line width=1.8pt] (v4)--(v5);
    \draw[line width=1.8pt] (v7)--(t);
    \draw (v1)--(v2);
    \draw (v5)--(v6)--(v7);
    \draw (v2)--(r1);
    \draw (r2)--(v5);
    \draw[dotted] (v3)--(v4);
    \draw[dotted] (r1)--(r2);
    \node[below=2pt] at (v2) {$u_1$};
    \node[below=2pt] at (v5) {$u_2$};
    \node[above=2pt] at (r1) {$v_1$};
    \node[above=2pt] at (r2) {$v_2$};
    \node[left=1pt] at (1.1,0.5) {\tiny $\le W(s,t)$};
    \node[right=1pt] at (3.3,0.5) {\tiny $\le W(s,t)$};
    
    %first and last edges in P_{uv} should not yet be included-RCS
\end{tikzpicture}
\caption{Illustration of the $+4W(\cdot,\cdot)$ randomized construction, with edges incident to the prefix and suffix bolded ($\ell=2$ in this example).}
\label{fig:4W-spanners}
\end{figure}
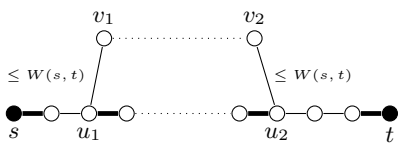

Construct a second random sample $R_2$ by sampling each vertex with probability $\frac{6}{d\ell}$; repeat if more than $\frac{6n}{d\ell}$ vertices are sampled. Then, for every two vertices $v_1, v_2 \in R_2$, add all edges among the shortest $v_1$-$v_2$ path among all paths which are missing at most $\frac{n}{d^2}$ edges; this adds $O\left(\left(\frac{n}{d\ell}\right)^2 \cdot \frac{n}{d^2}\right) = O\left(\frac{n^3}{\ell^2 d^4}\right)$ edges. By Lemma~\ref{lemma:d_initialize}, there are $\frac{d\ell}{6} = \Omega(d\ell)$ $d$-light neighbors of the prefix, and $\Omega(d\ell)$ $d$-light neighbors of the suffix. Then with at least a constant probability, we will have sampled in $R_2$ at least one $d$-light neighbor $v_1$ of the prefix and one $d$-light neighbor $v_2$ of the suffix of $\pi(s,t)$. By adding a shortest path from $v_1$ to $v_2$ containing at most $\frac{n}{d^2}$ edges along with the first and last $\ell$ edges of $\pi(s,t)$, this satisfies the pair $(s,t)$ by Lemma~\ref{lemma:4w}. 

%If $(s,t)$ is unsatisfied and there are more than $n/d^2$ missing edges on $\pi(s,t)$, we select a random sample of nodes obtained by including each node independently with probability $d/n$ and compute a shortest path tree from them. Here, we have a large number of missing edges and with constant probability, we have a sample node adjacent to a missing edge. Let the edges adjacent to a missing edge be $v$ and let the neighbored sampled node be $r$. Since, $v$ is a neighbor of $r$ and $v$ has a missing edge, hence, $(r, v)$ has weight no more than $W(s, t)$. Hence, if we use the path that goes through $r$, the path has a weight no more than $W(s, t)$. The analysis is similar to Theorem~\ref{theorem:2W_pairwise}. In the third scenario, we consider the pairs for which the shortest path in $G$ is missing more than $\ell$ but fewer than $n/d^2$ edges in $H$. For each such shortest path $\pi(s,t)$ we add the first and last $\ell$ missing edges. We refer to the \emph{prefix} (resp.\ \emph{suffix}) of $\pi(s,t)$ to mean the shortest prefix (suffix) containing these $\ell$ missing edges. If we obtain a random sample of nodes $R$ by including each node with probability $1/(\ell d)$, then with constant probability there exists two sampled nodes in the suffix and prefix of $\pi(s,t)$. Hence, for each pair of vertices $r, r' \in R$ we add the missing edges in the shortest path from $r$ to $r'$ if the number of missing edges is at most $n/d^2$.

The number of edges in $H$ is $|E(H)| = O\left(nd + |\Pc|\ell + n^3/(\ell^2d^4) \right)$. Set $\ell := n/|\Pc|^{5/7}$ and $d := |\Pc|^{2/7}$ to obtain the desired edge bound of $O(n|\Pc|^{2/7})$. As in Theorem~\ref{thm:sparse}.\ref{thm:pairwise-2W}, this only ensures the existence of a $+4W(\cdot,\cdot)$ spanner with slack; apply Lemma~\ref{lem:slack} to prove Theorem~\ref{thm:sparse}.\ref{thm:pairwise-4W}.
\qed
\end{proof}

\subsection{All-pairs $+cW(\cdot, \cdot)$ spanners}\label{sec:allpairs} %\todo{I suggest this not be its own section as it's essentially an extension of prior work. Maybe a subsection of the previous section?}

We now discuss the all-pairs setting, to prove Theorem~\ref{thm:sparse}.\ref{thm:allpairs-4W}.
Our construction is essentially the same as the pairwise construction given above, but the analysis is slightly different.
The parameter $\ell$ no longer plays a role, and we skip the step where we consider pairs $(s, t)$ such that $\pi(s, t)$ is missing at most $\ell$ edges.
But the rest of the construction is the same:
\begin{itemize}
    \item We start the spanner $H$ as a $d$-light initialization,
    \item We take a random sample $R_1$ by selecting each vertex with probability $\frac{6d}{n}$ and add a shortest path tree rooted at each sampled vertex, and then
    \item We take a random sample $R_2$ by including each vertex with probability $\frac{6}{d}$.
    For any two nodes $s, t \in R_2$, let $\pi(s, t)$ be the shortest $s \leadsto t$ path among all paths that are missing $\le \frac{n}{d^2}$ edges in the initialization (if no such path exists, then ignore the pair $s, t$).
    Then we add all edges in $\pi(s, t)$ to the spanner.
\end{itemize}
The total number of edges in this construction is $O(nd + \frac{n^3}{d^4})$.  Setting $d = n^{2/5}$ gives an edge bound of $O(n^{7/5})$.
Finally, by the same analysis as above, each demand pair is satisfied with constant probability.
So if we repeat the construction $C \log n$ times for a large enough constant $C$, and we union together the $O(\log n)$ spanners computed in each round, then with high probability all demand pairs are satisfied.
The final spanner thus has $O(n^{7/5} \cdot \log n) = \Oish(n^{7/5})$ edges in total.

\section{Lightweight local additive spanners} \label{sec:lightweight}
In this section, we prove Theorem~\ref{thm:lightness} by constructing \emph{lightweight} additive spanners. %All of the aforementioned spanners based on $d$-light initialization have worst-case unbounded lightness: let $G$ be the complete graph $K_n$ which contains a spanning tree of weight $\eps \approx 0$. Suppose all other edges of $K_n$ have weight $W \gg \eps$. Then $w(\MST(G)) = \eps$, but $d$-light initialization already adds $\Theta(Wnd)$ weight to the spanner $H$.
The previous spanner algorithms based on $d$-light initialization can produce spanners with poor lightness: Let $G$ be the graph obtained by taking $K_{\frac{n}{2},\frac{n}{2}}$ with all edges of weight $W$, then adding two paths of weight 0 connecting the vertices within each bipartition. Then $w(\MST(G)) = W$, while $d$-light initialization already adds $\Omega(Wnd)$ weight to the spanner $H$, or $\Omega(nd)$ lightness.

In order to construct lightweight additive spanners, we introduce a new initialization technique called $d$-\emph{lightweight initialization}, which adds edges of total weight at most $d$ for each vertex, starting from the MST of $G$. We first perform the following simple modifications to the input graph $G$ in order:
\begin{enumerate}\setlength{\itemsep}{5pt}
\item Scale the edge weights of $G$ linearly so that the weight of the MST is $\frac{n}{2}$.

\item Remove all edges of weight $\ge n$ from $G$.
\end{enumerate}
Note that step 1 also scales $W(s,t)$ for vertex pairs $(s,t)$, but the validity of an additive spanner or spanner path is invariant to scaling. Step 2 does not change the shortest path metric or the maximum edge weights $W(s,t)$; since the MST has weight $\frac{n}{2}$, this implies $W(s,t) \le d_G(s,t) \le \frac{n}{2}$.

Given $G$ and $d \ge 0$, the $d$-\emph{lightweight initialization} is a subgraph $H$ defined as follows: let $H$ be an MST of $G$. If $d > 0$, then for each vertex $v \in V$, consider all incident edges to $v$ which have not already been added to $H$, in nondecreasing order of weight. Add these incident edges one by one to $H$ until the next edge causes the total edge weight added corresponding to $v$ to be greater than $d$. This subgraph $H$ has $O(n + nd)$ total edge weight, and $O(1+d)$ lightness. Note that 0-lightweight initialization is simply the MST of $G$. After $d$-lightweight initialization, perform $+cW(\cdot,\cdot)$ spanner completion by iterating over each vertex pair $(s,t)$ in nondecreasing order of the maximum weight $W(s,t)$ (then by nondecreasing distance $d_G(s,t)$ if two pairs have the same maximum weight), and adding $\pi(s,t)$ to $H$ if $(s,t)$ is unsatisfied.

\subsection{Lightweight initialization and neighborhoods of shortest paths}
%In order to prove the desired lightness bounds, we make one additional modification to $G$: while $G$ contains an MST edge $e$ of weight greater than 1, subdivide $e$ into two edges of weight $\frac{w(e)}{2}$. Because $w(\MST(G)) = \frac{n}{2}$, this adds $O(n)$ vertices to the graph, all of which are on MST edges. Let $G'$ denote this modified graph. The maximum edge weights $W(s,t)$ for each pair $(s,t) \in V(G)$ remain unchanged. We remark that the MST of $G'$ also has weight $\frac{n}{2}$, and all MST edges have weight at most 1.
In order to prove the desired lightness bounds, we consider a subdivided graph obtained as follows: for each MST edge $e$ of $\MST(G)$, subdivide $e$ into $\lceil w(e) \rceil$ edges of weight $\frac{w(e)}{\lceil w(e) \rceil}$. Recall that $w(\MST(G)) = \frac{n}{2}$, so this adds $O(n)$ vertices, all of which are on MST edges. Let $G'$ denote this subdivided graph. We do not modify the maximum edge weights $W(s,t)$ for each pair $(s,t) \in V(G)$, even if such edges are subdivided in $G'$. We remark that the MST of $G'$ also has weight $\frac{n}{2}$, and all MST edges have weight at most 1.

%\gbnote{Pushed other case directly inside the following proofs.}
\begin{lemma}\label{lemma:lightweight-neighborhood-1}
Let $e = uv$ be an edge in $E(G')$ which is not contained in the $d$-lightweight initialization $H$. Then there are $\Omega(d^{1/2})$ vertices $x$ in $G'$ such that $d_H(u,x) \le w(e)$. Moreover, if $\eps' \in (0,1]$, there are $\Omega(\eps' w(e))$ vertices $x$ such that $d_H(u,x) \le \eps' w(e)$.
\end{lemma}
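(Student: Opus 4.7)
My approach is a case split on the magnitude of $w(e)$ relative to $\sqrt{d}$, combined with an easy tree-walking argument in the MST of $G'$ that handles both large $w(e)$ and the ``moreover'' statement.

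The first step is a structural observation about the $d$-lightweight initialization. Since $e=uv\notin H$, when the algorithm processed $u$ it stopped at some edge $e^*$ (either $e$ itself, or an earlier edge) because one more addition would push the total weight added at $u$ past $d$. Processing is in nondecreasing order of weight, so $w(e^*)\le w(e)$; hence every non-MST edge added at $u$ has weight at most $w(e)$, and the total weight of these added edges is strictly greater than $d-w(e)$. This is the analogue of Lemma~\ref{lemma:d_initialize} for the lightweight variant.

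Case A, $w(e)\le \sqrt{d}$: the above yields at least $(d-w(e))/w(e)\ge \sqrt{d}-1=\Omega(\sqrt{d})$ non-MST edges of $H$ incident to $u$, each of weight $\le w(e)$. Since $G$ is simple, these are distinct neighbors $x$ with $d_H(u,x)\le w(e)$.

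Case B, $w(e)>\sqrt{d}$: walk outward from $u$ in the MST of $G'$, whose edges all have weight at most $1$ by construction of the subdivision. I claim that for any radius $r>0$, the MST-ball $B_r(u)$ has at least $\min(r+1,|V(G')|)$ vertices: if $u$'s MST-eccentricity is $\ge r$, any path to a vertex at MST-distance exactly $r$ uses $\ge \lceil r\rceil$ edges, hence contains $\ge r+1$ vertices all at distance $\le r$; otherwise the entire MST fits inside $B_r(u)$. Applied with $r=w(e)$, and using $w(e)<n\le |V(G')|$ (the preprocessing removed edges of weight $\ge n$, and $G'$ still contains all $n$ original vertices), this produces $\Omega(w(e))=\Omega(\sqrt{d})$ vertices at MST-distance $\le w(e)$ from $u$. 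The MST lies inside $H$, so these bounds transfer to $d_H$.

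For the ``moreover'' clause, the same MST-walk argument with $r=\eps' w(e)$ suffices. Since $\eps'\in(0,1]$ and $w(e)<n$, we have $\eps' w(e)<n\le |V(G')|$, so the ball estimate gives $\Omega(\eps' w(e))$ vertices in the MST of $G'$ at distance $\le \eps' w(e)$ from $u$, and hence the same in $H$.

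The main obstacle is the ball estimate in the subdivided MST. The key point is that subdivision forces every MST edge of $G'$ to have weight at most $1$, so a shortest path of weight $r$ in the tree must contain at least $r$ edges and $r+1$ vertices; combined with the ``eccentricity is either large enough or else the whole tree fits'' dichotomy and the preprocessing bound $w(e)<n$, this yields the desired $\Omega(r)$ lower bound. Everything else is routine.
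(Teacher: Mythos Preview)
Your proposal is correct and follows essentially the same approach as the paper: a case split at the threshold $w(e)\approx\sqrt{d}$, counting edges added at $u$ during initialization when $w(e)$ is small, and walking outward along the subdivided MST (all edges of weight $\le 1$) when $w(e)$ is large and for the ``moreover'' clause. Your eccentricity dichotomy for the MST-ball count is a slightly more explicit version of the paper's terse assertion that $|N|\ge\max\{1,\lfloor w(e)\rfloor\}$, but the content is identical.
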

\begin{proof}
If $w(e) \ge d^{1/2}$, then consider the set $N$ of vertices (including $u$) such that each $x \in N$ is connected to $u$ by a path of $\le \lfloor w(e) \rfloor$ MST edges.
Since all MST edges have weight $\le 1$, all vertices $x \in N$ satisfy $d_H(u, x) \le w(e)$. Then $|N| = \max\{1, \lfloor w(e) \rfloor\} \ge \frac{w(e)}{2} = \Omega(d^{1/2})$, proving this case. Note that $w(e) \le n$ by step 2 above. Moreover given $\eps' \in (0,1]$, consider vertices $x$ which are at most $\lfloor \eps' w(e)\rfloor$ MST edges away from $u$. There are $\max\{1, \lfloor \eps' w(e) \rfloor\} \ge \frac{\eps' w(e)}{2} = \Omega(\eps' w(e))$ such vertices, all of distance $\le \eps' w(e)$ from $u$.

If $w(e) < d^{1/2}$, then every edge added when $u$ is considered in the $d$-lightweight initialization has weight less than $d^{1/2}$, since edges are added in nondecreasing weight.
We thus add at least $d^{1/2} - 1$ edges corresponding to $u$. By construction, each such edge is lighter than $e$, so for any vertex $x$ that is the endpoint of one of these edges $ux$, we have $d_H(u, x) \le w(ux) \le w(e)$. Lastly by considering $x=u$, we obtain $d^{1/2}$ vertices $x$ in $G'$. \qed
\end{proof}

We now prove a lightweight analogue of Lemma~\ref{lemma:d_initialize}:
\begin{lemma}\label{lemma:lightweight-neighborhood}
Let $H$ be a $d$-lightweight initialization, let $\pi(s,t)$ be a shortest $s$--$t$ path in $G'$, let $z$ be the total weight of the edges in $\pi(s,t)$ which are absent in $H$, and let $\eps' \in (0,1]$. Then there is a set $N$ of at least $\frac{\eps'z}{10} = \Omega(\eps'z)$ vertices in $G'$ which are of distance at most $\eps'W(s,t)$ in $H$ from some vertex in $\pi(s,t)$.
\end{lemma}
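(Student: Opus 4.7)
The plan is to apply Lemma~\ref{lemma:lightweight-neighborhood-1} to each missing edge of $\pi(s,t)$ individually, and to take the union of the resulting neighborhoods while bounding the overlap by an absolute constant.

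Let $e_1, \ldots, e_k$ denote the missing edges on $\pi(s,t)$ in the order they appear from $s$ to $t$, and let $u_i$ be the endpoint of $e_i$ closer to $s$. Since $\pi$ is a shortest path in $G'$ whose heaviest edge has weight $W(s,t)$, we have $w(e_i) \le W(s,t)$ for every $i$, and $\sum_i w(e_i) = z$. For each $i$, invoking Lemma~\ref{lemma:lightweight-neighborhood-1} with parameter $\eps'$ produces a set $N_i$ of at least $\eps' w(e_i)/2$ vertices lying at $H$-distance at most $\eps' w(e_i) \le \eps' W(s,t)$ from $u_i \in \pi(s,t)$. Let $N := \bigcup_i N_i$; every element of $N$ meets the distance requirement of the lemma, and counting with multiplicity yields $\sum_i |N_i| \ge \eps' z / 2$.

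It remains to show $|N| \ge \eps' z / 10$, which reduces to bounding the multiplicity with which each vertex is counted. Suppose $x \in N_i \cap N_j$ with $i < j$. Then the $H$-triangle inequality gives
\[
  d_H(u_i, u_j) \;\le\; d_H(u_i, x) + d_H(x, u_j) \;\le\; \eps'\bigl(w(e_i) + w(e_j)\bigr).
\]
Because $H$ is a subgraph of $G'$ and, under our tiebreaking convention, the subpath of $\pi$ from $u_i$ to $u_j$ is itself a shortest $u_i$--$u_j$ path in $G'$, this subpath has length exactly $d_G(u_i, u_j) \le d_H(u_i, u_j)$. Since it contains $e_i, e_{i+1}, \ldots, e_{j-1}$ in full, we obtain
\[
  \sum_{\ell=i}^{j-1} w(e_\ell) \;\le\; \eps'\bigl(w(e_i) + w(e_j)\bigr).
\]

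The principal obstacle is converting this local constraint into a uniform multiplicity bound. Iterated along any chain of indices whose corresponding $N_\ell$'s share a common vertex $x$, the inequality forces the weights $w(e_\ell)$ to grow geometrically at a rate controlled by $\eps'$; combined with the hard cap $w(e_\ell) \le W(s,t)$, this limits the chain length. With a small calibration---e.g.\ reapplying Lemma~\ref{lemma:lightweight-neighborhood-1} with the parameter $\eps'/c$ for a small constant $c$, sharpening the growth rate at the cost of only a constant factor in $|N_i|$---one can force the multiplicity to be at most $5$, yielding $|N| \ge (\eps' z / 2)/5 = \eps' z / 10$ as required. The delicate part is balancing the loss from shrinking the parameter against the savings from tighter overlap so that the constants combine to the stated $1/10$.
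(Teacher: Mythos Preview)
Your setup and the overlap inequality
\[
\sum_{\ell=i}^{j-1} w(e_\ell) \;\le\; \eps'\bigl(w(e_i)+w(e_j)\bigr)
\]
are correct, but the final paragraph does not close the argument. The claim that ``one can force the multiplicity to be at most $5$'' is not justified, and in fact a per-vertex multiplicity bound by an absolute constant is false. After your calibration $\eps'\mapsto \eps'/c$, the inequality applied to consecutive indices $i_p<i_{p+1}$ of a chain $I(x)=\{i_1<\cdots<i_m\}$ only yields one-sided growth
\[
w(e_{i_{p+1}}) \;\ge\; \frac{c-\eps'}{\eps'}\,w(e_{i_p}),
\]
which bounds $m$ by $O\bigl(\log(W(s,t)/w(e_{i_1}))\bigr)$, not by a constant: there is no lower bound on the weight of a missing edge, so arbitrarily many light missing edges can sit inside the $H$-neighborhood of a single heavy one. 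The ``hard cap $w(e_\ell)\le W(s,t)$'' controls the top of the chain but not the bottom, so the chain length is unbounded.

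The paper avoids this by a greedy packing rather than an overlap count. It repeatedly picks the heaviest remaining missing edge $e_i$, adds it to a set $M_1$, and discards every missing edge $e_j$ with $d_{G'}(u_i,u_j)\le 2\eps' w(e_i)$. Two facts follow: (i) the neighborhoods $N_i$ for $e_i\in M_1$ are pairwise \emph{disjoint}, since a common vertex would force $d_{G'}(u_i,u_j)\le 2\eps' w(e_i)$; and (ii) each selection discards at most $(1+4\eps')w(e_i)\le 5w(e_i)$ of total weight, so $\sum_{e_i\in M_1} w(e_i)\ge z/5$. Summing the disjoint contributions then gives $|N|\ge \sum_{e_i\in M_1}\eps' w(e_i)/2\ge \eps' z/10$. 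The point is that enforcing disjointness up front, at a constant-factor cost in weight, replaces your (unavailable) uniform multiplicity bound.
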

\begin{proof}
Let $M_0 = \{e_1, \ldots, e_{\ell}\}$ be the set of $\ell$ edges in $\pi(s,t)$ which are absent in $H$, where $z$ is the total weight of these edges. We denote the endpoints of $e_i$ by $u_i$ and $v_i$ where $u_i$ is closer to $s$. We will first construct a subset $M_1 \subseteq M_0$ of edges which satisfies two properties: i) the neighborhoods of nearby MST vertices from each edge in $M_1$ are pairwise disjoint, and ii) the total edge weight in $M_1$ is $\Omega(z)$.%\gbnote{"nearby MST vertices" defined? (I'm intentionally keeping this a little vague to illustrate the high-level ideas behind the analysis -RCS)}

Let $M_1 := \emptyset$. While $M_0$ is not empty, select an edge $e_i \in M_0$ of maximum weight and add it to $M_1$. Then remove $e_i$ from $M_0$, as well as all other edges $e_j$ such that $d_{G'}(u_i,u_j) \le 2\eps' w(e_i)$ from $M_0$. When $M_0$ is empty, we will be left with a set $M_1$ of edges which are relatively far apart. For each edge $e_i = u_iv_i \in M_1$, add to $N$ all vertices connected to $u_i$ by a path of $\le \eps' w(e_i)$ MST edges (including $u_i$ itself).
Let $N_i$ denote the set of vertices added to $N$ corresponding to edge $e_i$.
Since MST edges in $G'$ have weight at most 1, we have $|N_i| \ge \frac{\eps'w(e_i)}{2}$ by Lemma~\ref{lemma:lightweight-neighborhood-1}, and $d_H(u_i,x) \le \eps' w(e_i) \le \eps'W(s,t)$ for each $x \in N_i$.

%\vspace{-1em}

\begin{figure}[h]
\centering
\begin{tikzpicture}
%w(ui,vi) = 3 for example
    \node (s) at (0,0) {};
    \node (t) at (10,0) {};
    \node (mst1) at (4.5,1) {};
    \node (mst2) at (3.5,1.5) {};
    \node (mst3) at (5,1.75) {};
    \node (mst4) at (7.5,1) {};
    \node (mst5) at (6.5,1) {};
    \node (mst6) at (5.5,1.25) {};
    \node (ui) at (4.5,0) {};
    \node (vi) at (5.5,0) {};
    \node (ua) at (2.5,0) {};
    \node (va) at (3.5,0) {};
    \node (vb) at (6.5,0) {};
    \node (uc) at (8,0) {};
    \node (vc) at (9,0) {};

    \node [below=2pt] at (ui) {$u_i$};
    \node [below=2pt] at (vi) {$v_i$};
    \node [above=1pt] at (5,0) {5.2};
    \node [below=2pt] at (ua) {$u_{i-1}$};
    \node [below=2pt] at (va) {$v_{i-1}$};
    \node [below=2pt] at (vb) {$v_{i+1}$};
    \node [below=2pt] at (s) {$s$};
    \node [below=2pt] at (t) {$t$};
    \node [below=2pt] at (uc) {$u_{i+2}$};
    \node [below=2pt] at (vc) {$v_{i+2}$};
    
    \draw[dotted] (ui)--(vi)--(vb);
    \draw[dotted] (ua)--(va);
    \draw[dotted] (uc)--(vc);
    \draw[line width=1.8pt] (ui)--(mst1)--(mst2);
    \draw[line width=1.8pt] (mst1)--(mst3);
    \draw[line width=1.8pt] (uc)--(mst4)--(mst5);
    \draw[line width=1.8pt] (mst5)--(mst6)--(mst1);
    \draw[line width=1.8pt] (vi)--(mst5)--(vb);
    \draw (s)--(ua);
    \draw (va)--(ui);
    \draw (vc)--(t);
    \draw (vb)--(uc);

    \foreach \x in {s,t,vi,ua,uc,va,vb,vc,mst4,mst5}{
    	\draw [fill=white] (\x) circle [radius=0.1];
    }
    \foreach \x in {ui,mst1,mst2,mst3,mst6}{
    	\draw [fill=black] (\x) circle [radius=0.1];
    }
\end{tikzpicture}
\caption{Illustration of Lemma~\ref{lemma:lightweight-neighborhood} where MST edges are bolded and $\eps'=\frac{1}{2}$. By selecting $e_i = u_iv_i$ of weight 5.2 and adding $e_i$ to $M_1$, the five MST vertices which are connected by at most $\lfloor \frac{1}{2} \times 5.2 \rfloor = 2$ MST edges are added to $N$, and $|N_i|=5$. Note that $u_{i+1} = v_i$ in this example since $e_i$ and $e_{i+1}$ are adjacent.}
\end{figure}
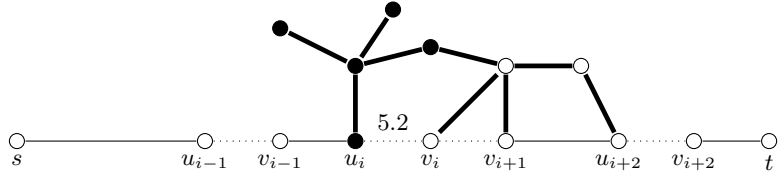

%\vspace{-2em}

We claim that the sets $N_i : e_i \in M_1$ are pairwise disjoint. Suppose otherwise there exists some vertex $x \in N_i \cap N_j$, and w.l.o.g. edge $e_i$ was selected before edge $e_j$ so that $w(e_i) \ge w(e_j)$. Since $x \in N_i$, we have $d_H(u_i,x) \le \eps'w(e_i)$ and similarly $d_H(u_j,x) \le \eps'w(e_j)$. Then by the triangle inequality: %$d_{G'}(u_i, u_j) \le d_H(u_i, u_j) \le d_H(u_i, x) + d_H(x, u_j) \le \eps'w(e_i) + \eps'w(e_j) \le 2\eps'w(e_i)$.
\begin{align*}
d_{G'}(u_i, u_j) \le d_H(u_i, u_j) &\le d_H(u_i, x) + d_H(x, u_j) \\
&\le \eps'w(e_i) + \eps'w(e_j) \le 2\eps'w(e_i). \\
\end{align*}
However this is a contradiction: once edge $e_i$ is selected for $M_1$, edge $e_j$ is also removed from $M_0$ since the distance from $u_i$ to $u_j$ is at most $2\eps'w(e_i)$, so $e_j$ could not have been selected for $M_1$.

%We also claim that the total edge weight in $M_1$ is $\Omega(z)$. Whenever an edge $e_i$ is selected for $M_1$, we remove $e_i$ and any nearby edges $e_j$ with $d_{G'}(u_i,u_j) \le 2\eps'w(e_i)$ from $M_0$. Since all nearby edges by definition are on $\pi(s,t)$, the total weight of the edges removed from $M_0$ is at most $(1+4\eps')w(e_i)$. Then the total edge weight in $M_1$ is at least $\frac{z}{1+4\eps'} \ge \frac{z}{5} = \Omega(z)$, since $0 < \eps' \le 1$. \qed
We also claim that the total edge weight in $M_1$ is $\Omega(z)$.
Whenever an edge $e_i$ is selected for $M_1$, we remove $e_i$ and any nearby edges $e_j$ with $d_{G'}(u_i,u_j) \le 2\eps'w(e_i)$ from $M_0$. Since all nearby edges by definition are on $\pi(s,t)$, the total weight of the edges removed from $M_0$ when $e_i$ is selected is at most $(1+4\eps')w(e_i)$. Then the total edge weight in $M_1$ is at least $\frac{z}{1+4\eps'} \ge \frac{z}{5} = \Omega(z)$, since $0 < \eps' \le 1$. 

Let $N := \bigcup_{e_i \in M_1} N_i$.
As shown previously, each $N_i$ contains at least $\frac{\eps'w(e_i)}{2}$ vertices.
By the above claims, the neighborhoods $N_i$ are pairwise disjoint and the sum of edge weights in $M_1$ is at least $\frac{z}{5}$, so we obtain% $|N| = \sum_{e_i \in M_1} |N_i| \ge \sum_{e_i \in M_1} \frac{\eps' w(e_i)}{2} \ge \frac{\eps'}{2} \cdot \frac{z}{5} = \Omega(\eps'z).$

\[
    |N| = \sum_{e_i \in M_1} |N_i| \ge \sum_{e_i \in M_1} \frac{\eps' w(e_i)}{2} \ge \frac{\eps'}{2} \cdot \frac{z}{5} = \Omega(\eps'z). \qquad \qed
\]
\end{proof}

\subsection{Proof of Theorem~\ref{thm:lightness}}
Using Lemmas~\ref{lemma:lightweight-neighborhood-1} and~\ref{lemma:lightweight-neighborhood}, we can analyze the lightness of the above spanner constructions by considering the number of vertex pairs in the subdivided graph $G'$ whose distances sufficiently improve upon adding $\pi(s,t)$, where $s,t \in V(G)$.

\begin{lemma}\label{lem:improvements-epsW}
Let $(s,t) \in V(G)$ be a vertex pair such that $\pi(s,t)$ is added during $+\eps W(\cdot,\cdot)$ spanner completion. Let $x \in N$ be a vertex in $G'$ which is of distance $\le \frac{\eps}{4}W(s,t)$ from some vertex $u$ in $\pi(s,t)$ in $H$. Let $H_0$ and $H_1$ denote the spanner before and after $\pi(s,t)$ is added. Then both of the following hold:
\begin{enumerate}
\item $d_{H_1}(s,x) \le d_{G'}(s,x) + \frac{\eps}{2}W(s,t)$ and $d_{H_1}(t,x) \le d_{G'}(t,x) + \frac{\eps}{2}W(s,t)$
\item $d_{H_0}(s,x) - d_{H_1}(s,x) \ge \frac{\eps}{4} W(s,t)$ or $d_{H_0}(t,x) - d_{H_1}(t,x) \ge \frac{\eps}{4} W(s,t).$
\end{enumerate}
\end{lemma}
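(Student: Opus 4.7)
The plan is to mirror the proof structure of Lemma~\ref{lem:improvements-2epsW}, substituting the hypothesis ``$x$ has $d_H(u,x) \le \tfrac{\eps}{4}W(s,t)$ for some $u \in \pi(s,t)$'' in place of the $d$-light neighbor relation. All calculations take place in the subdivided graph $G'$, and I will use two basic facts about subdivision: (i) distances between original vertices are preserved, so $d_{G'}(s,t)=d_G(s,t)$ and $d_{G'}(s,u)=d_G(s,u)$ for an original $u \in \pi(s,t)$; and (ii) $\pi(s,t) \subseteq H_1$ while the short $u$-to-$x$ path lives in $H \subseteq H_0 \subseteq H_1$.

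For part~1, I route an $s$-to-$x$ walk in $H_1$ through $u$. Since $\pi(s,t) \subseteq H_1$ and $u$ lies on it, restricting to the $s$-to-$u$ subpath gives $d_{H_1}(s,u) = d_{G'}(s,u)$, and by hypothesis $d_{H_1}(u,x) \le d_H(u,x) \le \tfrac{\eps}{4}W(s,t)$. The triangle inequality then yields the \emph{intermediate} bound
\[
d_{H_1}(s,x) \le d_{G'}(s,u) + \tfrac{\eps}{4}W(s,t).
\]
A second triangle inequality in $G'$, namely $d_{G'}(s,u) \le d_{G'}(s,x) + d_{G'}(x,u) \le d_{G'}(s,x) + \tfrac{\eps}{4}W(s,t)$, produces the stated estimate $d_{H_1}(s,x) \le d_{G'}(s,x) + \tfrac{\eps}{2}W(s,t)$; the bound for $t$ is symmetric.

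For part~2, I argue by contradiction. If either $d_{H_0}(s,x) = \infty$ or $d_{H_0}(x,t) = \infty$, the corresponding difference is infinite and part~2 holds trivially, so assume both are finite and that both differences are strictly less than $\tfrac{\eps}{4}W(s,t)$. Routing $s$ to $t$ through $x$ inside $H_0$ gives
\[
d_{H_0}(s,t) \le d_{H_0}(s,x) + d_{H_0}(x,t) < d_{H_1}(s,x) + d_{H_1}(x,t) + \tfrac{\eps}{2}W(s,t).
\]
Substituting the \emph{intermediate} bound from part~1 on each of $d_{H_1}(s,x)$ and $d_{H_1}(x,t)$ (rather than the weaker advertised bound) collapses the right-hand side to $d_{G'}(s,u) + d_{G'}(u,t) + \eps W(s,t) = d_{G'}(s,t) + \eps W(s,t) = d_G(s,t) + \eps W(s,t)$, contradicting the fact that $(s,t)$ was unsatisfied when $\pi(s,t)$ was added during $+\eps W(\cdot,\cdot)$ spanner completion.

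The main subtlety is that the slack of $\tfrac{\eps}{4}W(s,t)$ on $d_H(u,x)$ must simultaneously fund the rerouting error in part~1 and the contradiction margin in part~2; this forces the contradiction step to invoke the tighter intermediate inequality $d_{H_1}(s,x) \le d_{G'}(s,u) + \tfrac{\eps}{4}W(s,t)$ rather than the weaker final bound of part~1, which would otherwise cost an extra $\tfrac{\eps}{2}W(s,t)$ and leave no room to close the contradiction.
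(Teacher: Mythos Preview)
Your proof is correct and follows essentially the same approach as the paper, which only sketches the argument by pointing back to Lemma~\ref{lem:improvements-2epsW}. In particular, your use of the intermediate bound $d_{H_1}(s,x) \le d_{G'}(s,u) + \tfrac{\eps}{4}W(s,t)$ in the contradiction step (rather than the weaker final bound of part~1) exactly parallels the paper's use of inequality~\eqref{ineq:triangle-2W} in the proof of Lemma~\ref{lem:improvements-2epsW}.
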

Similar to Lemma~\ref{lem:improvements-2epsW}, statement 1. holds by the triangle inequality, and 2. can be proven by contradiction: if neither inequality was true, then $d_{H_0}(s,t) \le d_G(s,t) + \eps W(s,t)$, contradicting that $\pi(s,t)$ was added during spanner completion.

\begin{figure}[h]
\centering
\begin{tikzpicture}[scale=0.70]
    \node (s) at (0,0) {};
    \node (t) at (5,0) {};
    
    \node (u) at (2.5,0) {};
    \node (x) at (3,1) {};
    \node (v) at (3,0) {};
    
    \foreach \x in {s,t}{
        \draw [fill=black] (\x) circle [radius=0.1];
    }
    \foreach \x in {u,v,x}{
    	\draw [fill=white] (\x) circle [radius=0.1];
    }
    \foreach \x in {s,t,u,v}{
        \node[below=2pt] at (\x) {$\x$};
    }
    \node[above=2pt] at (x) {$x$};
    
    \draw (s)--(u);
    \draw[dotted] (u)--(v);
    \draw (v)--(t);
    \draw (u)--(2.6,0.3)--(3.2,0.7)--(x);

    \node[right=1pt] at (3,0.5) {\tiny $\le \eps W(s,t)/4$};
\end{tikzpicture}
\caption{Illustration of Lemma~\ref{lem:improvements-epsW}. By adding $\pi(s,t)$ to $H$ during $+\eps W(\cdot,\cdot)$ spanner completion, both pairs $(s,x)$ and $(t,x)$ are satisfied, and at least one of the pairs' distances improves by at least  $\frac{\eps}{4}W(u,v)$.}
\label{fig:epsW-spanners}
\end{figure}
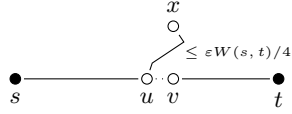
%\vspace{-2em}
\begin{proof}[Theorem~\ref{thm:lightness}.\ref{thm:lightness-epsW}]
By Lemma~\ref{lemma:lightweight-neighborhood} with $\eps' = \frac{\eps}{4}$, there are $\Omega(\eps z)$ vertices $x \in V(G')$ which are of distance $\le \frac{\eps}{4}W(s,t)$ from some vertex in $\pi(s,t)$, so adding path $\pi(s,t)$ of weight $z$ improves $\Omega(\eps z)$ vertex pairs. If $Z$ is the total weight added during $+\eps W(\cdot,\cdot)$ spanner completion, then there are $\Omega(\eps Z)$ improvements.

Once a vertex pair is set-off, it is only improved a \emph{constant} number of times (since any such pair $(s,x)$ or $(t,x)$, once set-off, has error $\frac{\eps}{2}W(s,t)$, and any improvement is by $\Omega(\eps W(s,t))$). Then by considering the number of improvements, we obtain $\Omega(\eps Z) = O(n^2) \implies Z = O(\frac{1}{\eps}n^2)$. This result does not depend on $d$-lightweight initialization, so set $d = 0$. The total weight of the spanner $H$ is $O(\frac{1}{\eps}n^2)$; since $w(\MST(G))=\frac{n}{2}$, we obtain $\text{lightness}(H) = O_{\eps}(n)$ as desired. \qed
\end{proof}

For the $+(4+\eps)W(\cdot,\cdot)$ spanner, the corresponding lemma is as follows:

\begin{lemma}\label{lem:improvements-4epsW}
Let $(u,v) \in V(G)$ be a vertex pair such that $\pi(u,v)$ is added during $+(4+\eps)W(\cdot,\cdot)$ spanner completion. Let $a$ and $b$ be $d$-lightweight neighbors of $s$ and $t$ in $G'$, respectively, such that $d_{H_0}(x,a) \le W(s,t)$ and $d_{H_0}(y,b) \le W(s,t)$. Let $x \in N$ be a vertex in $G'$ which is of distance $\le \frac{\eps}{4}W(s,t)$ from some vertex $y$ in $\pi(s,t)$. Then both of the following hold:

\begin{enumerate}
\item $d_{H_1}(a,x) \le d_{G'}(a,x) + (2+\eps)W(s,t)$ and $d_{H_1}(b,x) \le d_{G'}(b,x) + (2+\eps)W(s,t)$
\item $d_{H_0}(a,x) - d_{H_1}(a,x) \ge \frac{\eps}{4}W(s,t)$ or $d_{H_0}(b,x) - d_{H_1}(b,x) \ge \frac{\eps}{4}W(s,t)$.
\end{enumerate}
\end{lemma}
Again, this is proved using the same methods as in Lemmas~\ref{lem:improvements-2epsW},~\ref{lem:improvements-6W}, and \ref{lem:improvements-epsW}; see Fig.~\ref{fig:4epsW-spanners}.

\begin{figure}[h]
\centering
\begin{tikzpicture}[scale=0.80,every node/.style={inner sep=0,outer sep=0}]
    \node (s) at (0,0) {};
    \node (t) at (5,0) {};
    \node (a) at (0,1) {};
    \node (b) at (5,1) {};
    
    \node (y) at (2.5,0) {};
    \node (x) at (3,1) {};
    \node (z) at (3,0) {};

    \foreach \x in {s,t,y,z}{
        \node[below=4pt] at (\x) {$\x$};
    }
    \foreach \x in {a,x,b}{
    	\node[above=4pt] at (\x) {$\x$};
    }
    
    \draw (s)--(y);
    \draw (s)--(a);
    \draw (t)--(b);
    \draw[dotted] (y)--(z);
    \draw[dotted] (a)--(x)--(b);
    \draw (z)--(t);
    \draw (y)--(2.6,0.3)--(3.2,0.7)--(x);
    
    \foreach \x in {s,t}{
        \draw [fill=black] (\x) circle [radius=0.1];
    }
    \foreach \x in {y,x,z,a,b}{
    	\draw [fill=white] (\x) circle [radius=0.1];
    }

    \node[right=1pt] at (3,0.5) {\tiny $\le \frac{\eps}{4}W(s,t)$};
\end{tikzpicture}
\caption{Illustration of Lemma~\ref{lem:improvements-4epsW}. By adding $\pi(s,t)$ to $H$ during $+(4+\eps)W(\cdot,\cdot)$ spanner completion, both pairs $(a,x)$ and $(b,x)$ are satisfied, and at least one of the pairs' distances improves by at least $\frac{\eps}{4}W(s,t)$.}
\label{fig:4epsW-spanners}
\end{figure}

\begin{proof}[Theorem~\ref{thm:lightness}.\ref{thm:lightness-4epsW}]
Let $(s,t)$ be a vertex pair for which $\pi(s,t)$ is added during $+(4+\eps)W(\cdot,\cdot)$ spanner completion. Again, by Lemma~\ref{lemma:lightweight-neighborhood} with $\eps' = \frac{\eps}{4}$, there are $\Omega(\eps z)$ choices for $x$ which are of distance $\le \frac{\eps}{4}W(s,t)$ from some vertex in $\pi(u,v)$.

Similar to~\cite{elkin2020improved}, we observe that the first edge (say $st_1$) in $\pi(s,t)$ (starting from $s$) is absent in $H_0$ immediately before $\pi(s,t)$ is added. Suppose otherwise $st_1 \in E(H_0)$, then consider the pair $(t_1,t)$. Since $+(4+\eps)W(\cdot,\cdot)$ spanner completion considers all vertex pairs in nondecreasing $W(s,t)$ and then by distance $d_G(s,t)$, the pair $(t_1,t)$ is already satisfied before considering $(s,t)$. Then $d_{H_0}(s,t) \le w(st_1) + d_{H_0}(t_1,t) \le w(st_1) + [d_G(t_1,t) + (4+\eps)W(s,t)] \le  d_G(s,t) + (4+\eps)W(s,t)$, contradicting that $\pi(s,t)$ was added to $H$. Symmetrically, the last edge in $\pi(s,t)$ is absent in $H_0$.

By Lemma~\ref{lemma:lightweight-neighborhood-1} and the above observation, we can establish there are $\Omega(d^{1/2})$ choices for $a$ and $b$. Then for every choice of $x$, $a$, $b$, adding $\pi(s,t)$ sets off the pairs $(a,x)$, $(b,x)$ if not already, and improves at least one pair's distance by at least $\frac{\eps}{4}W(s,t)$. By Lemma~\ref{lemma:lightweight-neighborhood}, this gives $\Omega(\eps d^{1/2}z)$ improvements upon adding $z$ total edge weight in $\pi(s,t)$.

If $Z$ is the total weight added during $+(4+\eps) W(\cdot,\cdot)$ spanner completion, then
\[ \Omega(\eps d^{1/2}Z) = O\left(\frac{n^2}{\eps}\right) \implies Z = O_{\eps}\left(\frac{n^2}{d^{1/2}}\right).\]
By setting $d := n^{2/3}$, we obtain that the total weight of the spanner $H$ is $O_{\eps}(n^{5/3})$; since $w(\MST(G)) = \frac{n}{2}$, this implies $\text{lightness}(H) = O_{\eps}(n^{2/3})$ as desired.
 \qed
\end{proof}

%\begin{comment}
\section{Conclusion}
We showed that many of the global $+cW$ spanners from~\cite{ahmed2020weighted} are actually local $+cW(\cdot,\cdot)$ spanners by strengthening the analysis and Lemma~\ref{lemma:d_initialize}. We additionally provided several additional pairwise and all-pairs spanners based on $d$-light initialization; this suggests $d$-light initialization can be used to construct several other types of additive spanners.

We then provided the first known lightweight additive spanner constructions, whose lightness does not depend on $W$. These lightness results are likely not optimal; we leave the questions of finding more optimized lightness guarantees, as well as lightness guarantees for pairwise and subsetwise spanners as future open problems. Further, another natural question is whether we can construct additive spanners in weighted graphs which are simultaneously sparse \emph{and} lightweight.
%\end{comment}

\paragraph{Acknowledgements} The authors wish to thank Michael Elkin, Faryad Darabi Sahneh, and the anonymous reviewers for their discussion and comments.

\bibliography{references}

\begin{thebibliography}{10}
\providecommand{\url}[1]{\texttt{#1}}
\providecommand{\urlprefix}{URL }
\providecommand{\doi}[1]{https://doi.org/#1}

\bibitem{abboud20174}
Abboud, A., Bodwin, G.: The 4/3 additive spanner exponent is tight. Journal of
  the ACM (JACM)  \textbf{64}(4),  1--20 (2017)

\bibitem{ahmed2020weighted}
Ahmed, R., Bodwin, G., Darabi~Sahneh, F., Kobourov, S., Spence, R.: Weighted
  additive spanners. In: Adler, I., M{\"u}ller, H. (eds.) Graph-Theoretic
  Concepts in Computer Science. pp. 401--413. Springer (2020)

\bibitem{ahmed2020graphElsevier}
Ahmed, R., Bodwin, G., Sahneh, F.D., Hamm, K., Jebelli, M.J.L., Kobourov, S.,
  Spence, R.: Graph spanners: A tutorial review. Computer Science Review
  \textbf{37},  100253 (2020)

\bibitem{ahmed2021multi}
Ahmed, R., Bodwin, G., Sahneh, F.D., Hamm, K., Kobourov, S., Spence, R.:
  Multi-level weighted additive spanners. arXiv preprint arXiv:2102.05831
  (2021)

\bibitem{Aingworth99fast}
Aingworth, D., Chekuri, C., Indyk, P., Motwani, R.: Fast estimation of diameter
  and shortest paths (without matrix multiplication). SIAM Journal on Computing
   \textbf{28},  1167--1181 (1999)

\bibitem{Alth90}
Alth{\"o}fer, I., Das, G., Dobkin, D., Joseph, D.: Generating sparse spanners
  for weighted graphs. In: Scandinavian Workshop on Algorithm Theory (SWAT).
  pp. 26--37. Springer Berlin Heidelberg, Berlin, Heidelberg (1990)

\bibitem{baswana2010additive}
Baswana, S., Kavitha, T., Mehlhorn, K., Pettie, S.: Additive spanners and
  ($\alpha$, $\beta$)-spanners. ACM Transactions on Algorithms (TALG)
  \textbf{7}(1), ~5 (2010)

\bibitem{bodwin2020note}
Bodwin, G.: A note on distance-preserving graph sparsification. arXiv preprint
  arXiv:2001.07741  (2020)

\bibitem{DBLP:journals/ijcga/CaiK97}
Cai, L., Keil, J.M.: Computing visibility information in an inaccurate simple
  polygon. International Journal of Computational Geometry \& Applications
  \textbf{7}(6),  515--538 (1997)

\bibitem{censor2016distributed}
Censor-Hillel, K., Kavitha, T., Paz, A., Yehudayoff, A.: Distributed
  construction of purely additive spanners. In: Gavoille, C., Ilcinkas, D.
  (eds.) Distributed Computing. pp. 129--142. Springer (2016)

\bibitem{CDNS92}
Chandra, B., Das, G., Narasimhan, G., Soares, J.: New sparseness results on
  graph spanners. In: Proceedings of the eighth annual symposium on
  Computational geometry. pp. 192--201. ACM (1992)

\bibitem{chechik2013new}
Chechik, S.: New additive spanners. In: Proceedings of the twenty-fourth annual
  ACM-SIAM symposium on Discrete algorithms (SODA). pp. 498--512. Society for
  Industrial and Applied Mathematics (2013)

\bibitem{CW16}
Chechik, S., Wulff-Nilsen, C.: Near-optimal light spanners. In: Proceedings of
  the twenty-seventh annual ACM-SIAM symposium on Discrete algorithms (SODA).
  pp. 883--892. Society for Industrial and Applied Mathematics (2016)

\bibitem{Cygan13}
Cygan, M., Grandoni, F., Kavitha, T.: On pairwise spanners. In: Proceedings of
  30th International Symposium on Theoretical Aspects of Computer Science
  (STACS 2013). vol.~20, pp. 209--220 (2013)

\bibitem{DB14}
Dobson, A., Bekris, K.E.: Sparse roadmap spanners for asymptotically
  near-optimal motion planning. The International Journal of Robotics Research
  \textbf{33}(1),  18--47 (2014)

\bibitem{dor2000all}
Dor, D., Halperin, S., Zwick, U.: All-pairs almost shortest paths. SIAM Journal
  on Computing  \textbf{29}(5),  1740--1759 (2000)

\bibitem{elkin2019almost}
Elkin, M., Gitlitz, Y., Neiman, O.: Almost shortest paths and {PRAM} distance
  oracles in weighted graphs. arXiv preprint arXiv:1907.11422  (2019)

\bibitem{elkin2020improved}
Elkin, M., Gitlitz, Y., Neiman, O.: Improved weighted additive spanners. arXiv
  preprint arXiv:2008.09877  (2020)

\bibitem{ENS14}
Elkin, M., Neiman, O., Solomon, S.: Light spanners. In: International
  Colloquium on Automata, Languages, and Programming. pp. 442--452. Springer
  (2014)

\bibitem{Elkin:2004:SCG:976327.984900}
Elkin, M., Peleg, D.: $(1 + \epsilon,\beta)$-spanner constructions for general
  graphs. SIAM Journal on Computing  \textbf{33}(3),  608--631 (Mar 2004)

\bibitem{erdos1963extremal}
Erd{\H{o}}s, P.: Extremal problems in graph theory. In: Proceedings of the
  Symposium on Theory of Graphs and its Applications. p.~2936 (1963)

\bibitem{Filtser16}
Filtser, A., Solomon, S.: The greedy spanner is existentially optimal. In:
  Proceedings of the 2016 ACM Symposium on Principles of Distributed Computing.
  pp. 9--17. ACM (2016)

\bibitem{Kavitha2017}
Kavitha, T.: New pairwise spanners. Theory of Computing Systems
  \textbf{61}(4),  1011--1036 (2017)

\bibitem{Kavitha15}
Kavitha, T., Varma, N.M.: Small stretch pairwise spanners and approximate
  $d$-preservers. SIAM Journal on Discrete Mathematics  \textbf{29}(4),
  2239--2254 (2015)

\bibitem{khuller1995balancing}
Khuller, S., Raghavachari, B., Young, N.: Balancing minimum spanning trees and
  shortest-path trees. Algorithmica  \textbf{14}(4),  305--321 (1995)

\bibitem{knudsen2014additive}
Knudsen, M.B.T.: Additive spanners: A simple construction. In: Scandinavian
  Workshop on Algorithm Theory (SWAT). pp. 277--281. Springer (2014)

\bibitem{LS20}
Le, H., Solomon, S.: A unified and fine-grained approach for light spanners.
  arXiv preprint arXiv:2008.10582  (2020)

\bibitem{MB13}
Marble, J.D., Bekris, K.E.: Asymptotically near-optimal planning with
  probabilistic roadmap spanners. IEEE Transactions on Robotics
  \textbf{29}(2),  432--444 (2013)

\bibitem{doi:10.1002/jgt.3190130114}
Peleg, D., Schäffer, A.A.: Graph spanners. Journal of Graph Theory
  \textbf{13}(1),  99--116 (1989)

\bibitem{PU89jacm}
Peleg, D., Upfal, E.: A trade-off between space and efficiency for routing
  tables. Journal of the ACM (JACM)  \textbf{36}(3),  510--530 (1989)

\bibitem{Pettie09}
Pettie, S.: Low distortion spanners. ACM Transactions on Algorithms (TALG)
  \textbf{6}(1), ~7 (2009)

\bibitem{SSAH14}
Salzman, O., Shaharabani, D., Agarwal, P.K., Halperin, D.: Sparsification of
  motion-planning roadmaps by edge contraction. The International Journal of
  Robotics Research  \textbf{33}(14),  1711--1725 (2014)

\bibitem{woodruff2010additive}
Woodruff, D.P.: Additive spanners in nearly quadratic time. In: International
  Colloquium on Automata, Languages, and Programming. pp. 463--474. Springer
  (2010)

\end{thebibliography}

\newpage
\appendix

\end{document}